\newtheorem{thm}{Theorem}[section]
\newtheorem{lem}[thm]{Lemma}
\newtheorem{prop}[thm]{Proposition}
\theoremstyle{definition}
\theoremstyle{remark}
\numberwithin{equation}{section}
\newtheorem{example}[thm]{Example}
\newcommand{\set}[1]{\left\{#1\right\}}
\newcommand{\ass}{\stackrel{\textup{\tiny def}}{=}}
\newcommand{\trr}{\triangleright}
\newcommand{\rrt}{\triangleleft}
\newcommand{\brr}{\blacktriangleright}
\newcommand{\bra}{\left|}
\newcommand{\ket}{\right \rangle}
\newcommand{\kett}{\left \langle}
\newcommand{\brat}{\right|}
\newcommand{\Hash}{\begin{minipage}{8pt}\includegraphics[width=8pt]{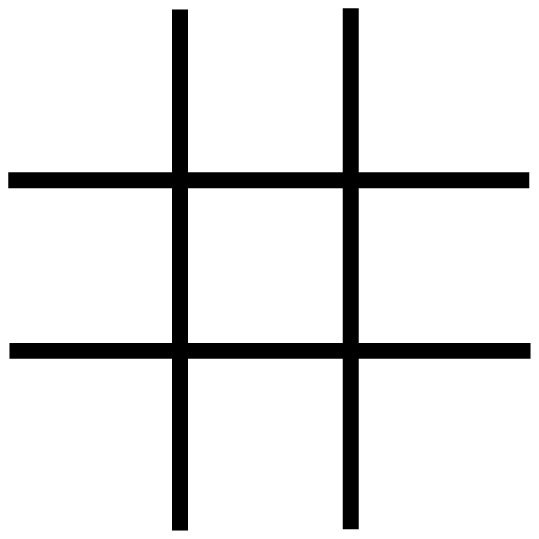}\end{minipage}}
\begin{document}

\title{Tales told by coloured tangles}%

\author{Daniel Moskovich and Avishy Y. Carmi}

\address{Faculty of Engineering Sciences \& Center for Quantum Information and Technology \\ Ben-Gurion University of the Negev, Beer-Sheva 8410501, Israel}



\begin{abstract}
Tangle machines are a topologically inspired diagrammatic formalism to describe information flow in networks. This paper begins with an expository account of tangle machines motivated by the problem of describing `covariance intersection' fusion of Gaussian estimators in networks. It then gives two examples in which tangle machines tell stories of adiabatic quantum computations, and discusses learning tangle machines from data.
\end{abstract}
\maketitle
\section{Introduction}


The Kantian explanation for `the unreasonable effectiveness of mathematics in the natural sciences'' is that the patterns which we seek out in nature are those patterns with which we are already familiar as innate categories of perception~\cite{Knuth:15}. When we encounter a new phenomenon, perhaps we first seek to describe this phenomenon in the category of perception most familiar to us. Taken one step further, perhaps human choice of research topics is itself influenced by the scientific language of the day.

The conventional graphical language in computer science is the language of flowcharts (\textit{e.g.}~\cite{Bohl:07}). A flowchart is a labeled directed graph in which edges represent transitions that can be concatenated. Thus, an ordered sequence of edges $(e_1,e_2,\ldots,e_k)$ in which the head of $e_i$ is the tail of $e_{i+1}$ for all $i=1,2,\ldots, k-1$ represents a transition from the tail of $e_1$ to the head of $e_k$. Edges can be prepended or appended to such sequences, and the transition represented by prepending $e_1$ to $(e_2,e_3,\ldots,e_k)$ is guaranteed to coincide with the transition represented by appending $e_k$ to $(e_1,e_2,\ldots,e_{k-1})$. This property is called \emph{associativity}. Thus, the language of flowcharts natively describes \emph{sequential} processes, although this paradigm may be expanded via a \emph{concurrency symbol}.

In previous work we proposed a different diagrammatic formalism called \emph{tangle machines}. A tangle machine looks visually similar to a coloured tangle diagram in low dimensional topology~\cite{CarmiMoskovich:14,CarmiMoskovich:14b,CarmiMoskovich:15a,CarmiMoskovich:15b,CarmiMoskovich:15c}. It consists of arcs which we think of as analogous to computer registers and of crossings that we call \emph{interactions} which we think of as analogous to logic gates. The interaction below involves three arcs called an \emph{agent}, an \emph{input patient}, and an \emph{output patient}. The interaction is labeled by a binary operation $\trr$, and for labels $x$ for the input patient and $y$ for the agent, the label of the output patient is $x\trr y$.

\begin{equation}
\psfrag{u}[r]{\small $x$}
\psfrag{y}[c]{\small $y$}
\psfrag{x}[l]{\small $x \trr y$}
\includegraphics[width=0.1\textwidth]{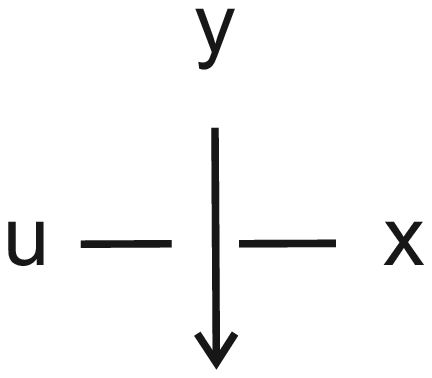}
\end{equation}
We call operation $\trr$ a \emph{fusion} if it satisfies three properties:

\begin{enumerate}
\item $x\trr x=x$ for all $x$. Thus, fusing information with itself should be the identity operation.
\item An input label can uniquely be recovered from its corresponding output label together with the agent label. Thus, fusion should `not forget'.
\item Fusing $x\trr z$ with $y\trr z$ should equal $(x\trr y)\trr z$. Thus, fusion should count the redundant appearance of $z$ only once towards the final result.
\end{enumerate}

The language of tangle machines natively describes networks of fusions (strengthening the requirement that the fusion `not forget' strengthened to $\bullet \trr y$ being a bijection). There is nothing inherently sequential about fusions, so tangle machines and flowcharts are quite different. What is the expressive power of tangle machines? More prosaically, what tales can a tangle machine tell?

Let us adopt an abstract view of a computation as a scheme to apply the information content of a computer program to the information content of input data in order to generate output data. Some computations (in this generalized sense) are fusions. It turns out that we can describe any Turing machine by a tangle machine under this paradigm. If we limit the number of interactions then we can describe any computation in the complexity class $\mathrm{IP}$~\cite{CarmiMoskovich:15a}. Parenthetically, we don't know how much more we can describe by tangle machines--- for example, can tangle machines with bounded interactions describe any computation in complexity class $\mathrm{MIP}$?

What more can be described by tangle machines? Archetypal mathematical examples of fusions are convex combinations and group conjugations, but we suggest that an archetypal real-world example of a fusion operation is fusion of estimates of a random variable. If the random variable follows a Bernoulli distribution then estimator fusion and convex combination essentially coincide (an example of this is in~\cite{CarmiMoskovich:15a}). We consider fusion of normally distributed estimates. The Kalman filter is a linear optimal measure to fuse Gaussian estimates, but it requires knowledge of how the estimate errors depend on one another. This information is not contained in the estimates themselves and may sometimes be entirely unknown. \emph{Covariance intersection} is a method to fuse Gaussian estimates whose error cross-correlations are unknown. Section~\ref{S:CI} introduces tangle machines as a diagrammatic language for information flow in networks in which information is fused by covariance intersection. It contains a geometric interpretation of covariance intersection as a choice of a point on a geodesic on a \emph{statistical manifold} whose points parameterize normal distributions.

Section~\ref{S:LDT} explains the connection of tangle machines to low dimensional topology. Section~\ref{SS:Knots} recalls a diagrammatic algebra from low dimensional topology. Section~\ref{SS:generalize} generalizes and discusses quandles, after which Section~\ref{SS:Info} defined tangle machines and shows how they may describe information fusion networks. The above sections represent an expository account of previous published work of the authors. As discussed in \cite{CarmiMoskovich:15c}, tangle machines (with colours suppressed) topologically arise as diagrams of embedded networks of intervals and spheres in standard Euclidean $4$--space.

Section~\ref{S:Quantum} takes tangle machines into the realm of quantum information, presenting several adiabatic quantum computations (\textsc{AQC}'s) with different performance features. Section~\ref{SS:TimeEntanglement} presents a computation in which a product state evolves into an entangles state, and Section~\ref{SS:2sat} presents a Boolean $2$--\textsc{SAT} problem over four qubits. A further example may be found in \cite[Section 5]{CarmiMoskovich:15b}.

Section~\ref{S:Utility} discusses potential utility of the tangle machine description. Inherited from low dimensional topology, tangle machines have topological invariants that are characteristic quantities giving high-level information about what is happening in the machine~\cite{CarmiMoskovich:14b,CarmiMoskovich:15c}. In particular, there are meaningful notions of \emph{capacity} and of \emph{complexity} for tangle machines. In addition, tangle machines provide a flexible description, and this flexibility may perhaps be useful for fault tolerance~\cite{CarmiMoskovich:14}.

Section~\ref{S:Learn} concludes by discussing the problem of adapting existing causality-detection algorithms to detect tangle machines. We illustrate with an example in which we detect a single crossing in real-world Google search data.

\begin{figure}
\centering
\psfrag{a}[r]{Covariance intersection}
\psfrag{b}[r]{Coloured tangle diagrams}
\psfrag{c}[r]{\raisebox{3pt}{Quantum information}\rule{0pt}{15pt}}
\psfrag{x}[c]{Tangle machines\phantom{BI}}
\psfrag{y}{}
\psfrag{z}[l]{\textsc{TM}s for \textsc{QI}}
\psfrag{t}[l]{Quantum info fusion}
\includegraphics[width=0.6\textwidth]{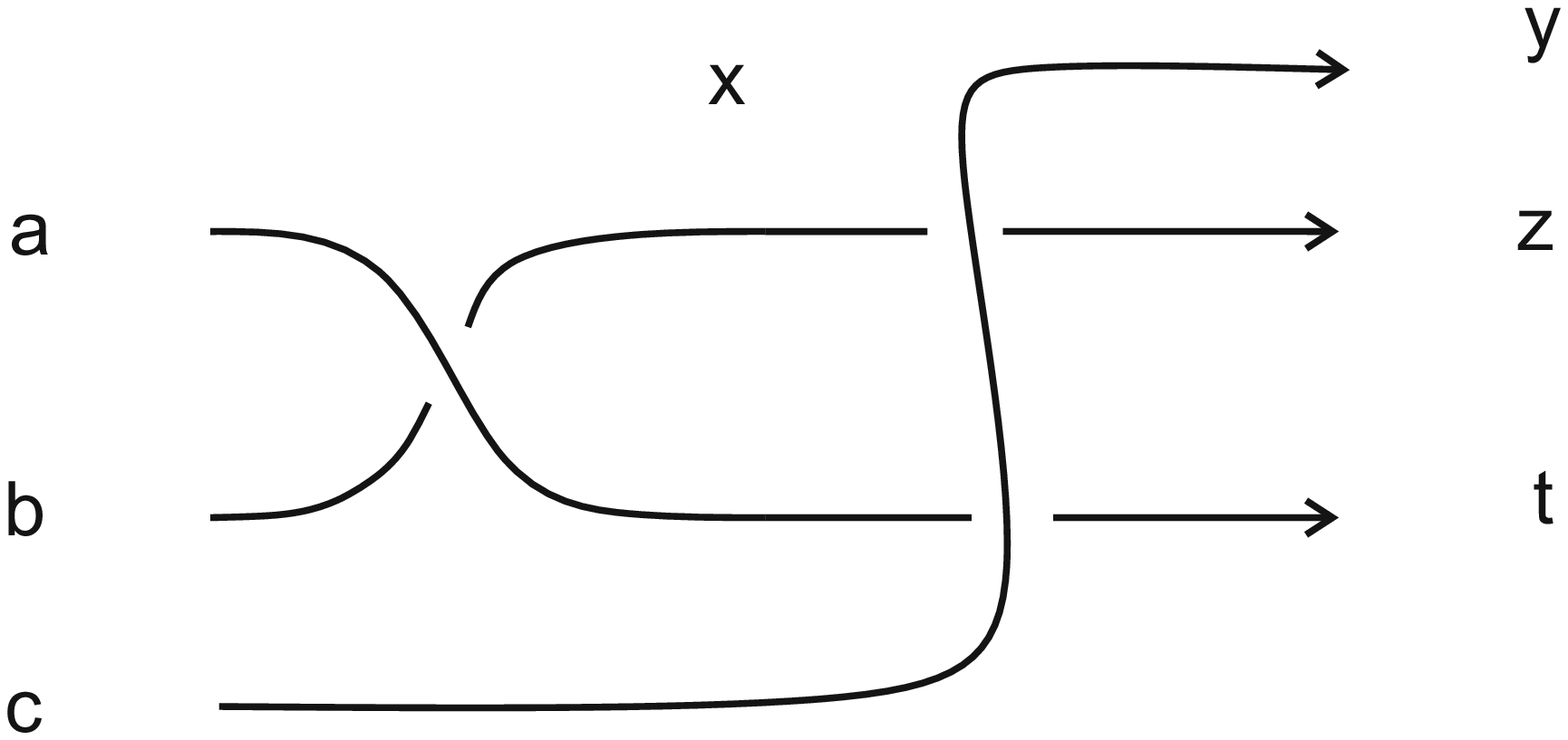}
\caption{\label{fig:r3plan} \small A coloured tangle representing this paper. At each crossing, the output topic is an application of the input topic to the agent (overcrossing) topic, perhaps in a different sense for each crossing. This figure suggests how a coloured tangle can tell a tale and has no rigourous mathematical meaning.}
\end{figure}

 The father of the use of tangle diagrams as a diagrammatic language for computation is Louis Kauffman, who represented automata, nonstandard set theory, and lambda calculus with tangle diagrams \cite{Kauffman:94,Kauffman:95,BuligaKauffman:13}. Buliga has suggested to represent computations using a different calculus of coloured tangles \cite{Buliga:11b}. In another direction, a different diagrammatic calculus, originating in higher category theory, has been used in the theory of quantum information--- see \textit{e.g.} \cite{AbramskyCoecke:09,BaezStay:11,Vicary:12}. In all of these approaches an element in a topologically inspired diagrammatic algebra represents a computation and equivalent diagrams represent bisimilar computations. Tangle machines are distinguished by the role of colours, by locality of orientations, and by binding together of operations in `interactions', as discussed in \cite{CarmiMoskovich:15b}.


\subsubsection*{Acknowledgement} D.M. was partially supported by the Helmsley Charitable Trust through the Agricultural, Biological and Cognitive Robotics Initiative of Ben-Gurion University of the Negev.

\section{An information geometric view of covariance intersection}\label{S:CI}

\subsection{Information geometry}\label{SS:InfoGeo}

Information geometry applies the methods of differential geometry to probability theory~\cite{AmariNagaoka:07}. Its main role so far has been to provide geometric coordinate-free elucidation to known facts. For example, the Cram\'{e}r--Rao inequality over $\mathds{R}$ which bounds variance from below by the reciprocal of Fisher information becomes a statement about the statistical manifold changing more in the vicinity of points of high curvature than in the vicinity of points of low curvature.

A \emph{statistical manifold} corresponding to an $n$--parameter family of distributions has a point for each possible value of the parameters. In particular, the statistical manifold for all normal distributions over $\mathds{R}$ is a surface whose coordinates are $(\mu,\sigma^2)$, the order and the variance of the distribution. A statistical manifold has a Riemannian metric called the \emph{Fisher information metric}. In the case of Gaussian random variables the metric is given by the inverse of the covariance matrix (over $\mathds{R}$ by the reciprocal of the variance as in the Cram\'{e}r--Rao bound).

We will be considering the problem of estimate fusion in an information geometric setting. The estimates will be points on the statistical manifold. Our main observation will be that that a fusion corresponds to a choice of a point on a geodesic.

\subsection{Estimator fusion}

In this section we ask the question: How can we fuse estimates of an unknown quantity whose interdependence is unknown?

Consider a pair of sensors which provide noisy observations about an unknown state vector $X$. Each sensor runs its own filtering algorithm to yield an estimate $X_{1,2}$ of $X$ and an estimate $C_{1,2}$ of the error covariance matrix of $X$.

\begin{multline}
\mathrm{cov}\left[ X-\hat{X}_i \right] \ass
E\left[(X-\hat{X}_i)(X-\hat{X}_i)^T \right] \\ - E\left[X-\hat{X}_i\right] E\left[X-\hat{X}_i\right]^T \quad \text{$i=0,1,2$}\enspace .
\end{multline}

We wish to optimally integrate these estimates $\left(\hat{X}_1,C_1\right)$ and $\left(\hat{X}_2,C_2\right)$. It is important that the fused estimate be \emph{consistent} (or \emph{conservative}), meaning that:
\begin{equation}
C_i  \geq \mathrm{cov}\left[ X-\hat{X}_i \right] \enspace ,
\end{equation}
\noindent \textit{i.e.} that the matrix difference $C_i  - \mathrm{cov}\left[ X-\hat{X}_i \right]$ is positive semi-definite. We would like all estimators to be consistent because inconsistent estimators may diverge and cause errors.

If the correlations between the estimate errors are known, a linear optimal fusion in the sense of minimum mean estimation error (\textsc{MMSE}) is given by the \emph{Kalman filter}. But such error cross-correlations are typically unknown and may be unmeasurable in real-world settings. Ignoring error cross-correlations can cause the Kalman filter estimates to be non-conservative and perhaps to diverge \cite{ChangChongMori:10}. The standard approach in applications is to increase the system noise artificially. But proper use of this heuristic requires substantial empirical analysis and compromises the integrity of the Kalman filter framework \cite{Niehsen:02}.


\emph{Covariance intersection} provides a method to fuse estimates whose error cross-correlations are unknown in a way that guarantees that the resulting estimate is consistent~\cite{JulierUhlman:01}. The covariance intersection method to fuse estimators $\left(\hat{X}_1,C_1\right)$ with $\left(\hat{X}_2,C_2\right)$ requires a choice of \emph{weight} $\omega\in (0,1)$. The bottleneck for practical covariance intersection is the computation of the optimal value for the weight $\omega$ with respect to some (typically nonlinear) cost function such as logdet or trace~\cite{Chen:02, Franken:05}. In the present paper we treat $\omega$ as a formal parameter or as an unknown constant for each pair of estimates to be fused.

We denote the covariance intersection of $\left(\hat{X}_1,C_1\right)$ with $\left(\hat{X}_2,C_2\right)$ with respect to $\omega\in (0,1)$ as follows:
\begin{equation}
\left(\hat{X}_a, C_a\right)_\omega\ass \left(\hat{X}_1, C_1\right)\triangleright_\omega \left(\hat{X}_2, C_2\right)\enspace .
\end{equation}
We construct $\left(\hat{X}_a, C_a\right)_\omega$ as follows:
\begin{subequations}
\label{eq:ci}
\begin{equation}
\hat{X}_{a} = (1-\omega) C_{a} C_1^{-1} \hat{X}_0 + \omega C_a C_2^{-1} \hat{X}_1 \enspace,
\end{equation}
\begin{equation}
C_a^{-1} = (1-\omega) C_1^{-1} + \omega C_2^{-1} \enspace .
\end{equation}
\end{subequations}

The working principle of covariance intersection is that it fuses two conservative estimates into a third conservative estimate. The reason, for Gaussian estimators, is that the \emph{covariance ellipse} of the fused estimator includes the intersection of the covariance ellipse of the estimators being fused (Figure~\ref{fig:cvinter}). A \emph{covariance ellipse} of a covariance matrix $C$ is the locus of vectors $v$ such that $v^{\, T} C^{-1} v\leq a$ where $a$ is some arbitrary (but fixed) constant. For Gaussian estimators, $C^{-1}$ represents the Fisher information. For details see~\cite{JulierUhlman:01}.

\begin{figure}
\centering
\psfrag{a}[c]{$C_1^{-1}$}
\psfrag{b}[c]{$C_{2}^{-1}$}
\psfrag{c}[l]{$C_a^{-1}=0.3C_1^{-1}+0.7C_2^{-1}$}
\includegraphics[width=0.6\textwidth]{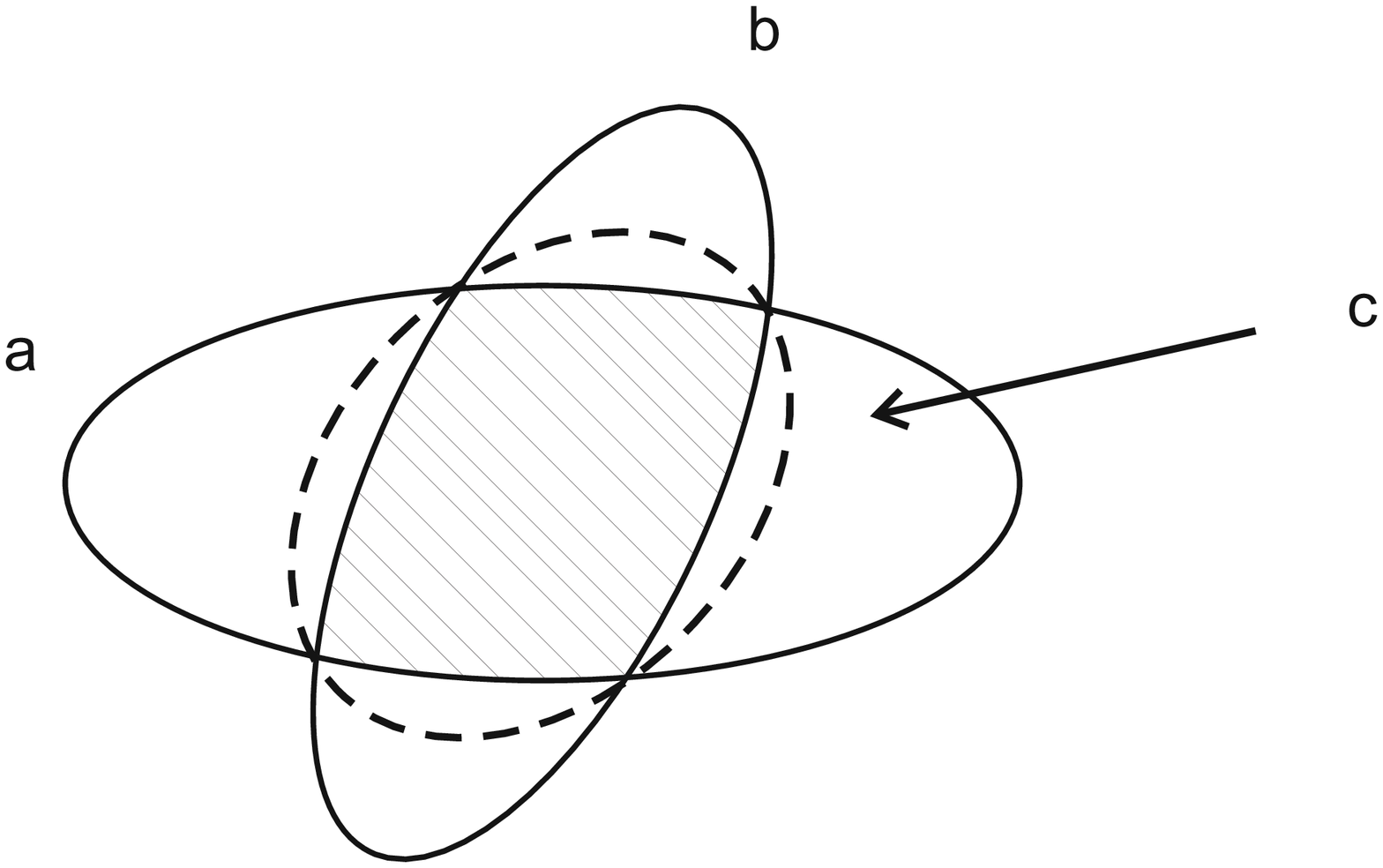}
\caption{\label{fig:cvinter} \small The covariance ellipse of the fused estimator is a minimal ellipse that includes within it the intersection of the covariance ellipses of the estimates being fused.}
\end{figure}

We prove the following proposition in the appendix.

\begin{prop}\label{P:Geodesic}
The $1$--parameter family of Gaussian covariance intersections $\left(\hat{X}_a, C_a\right)_{\omega\in [0,1]}$ parameterizes the geodesic from  $\left(\hat{X}_1,C_1\right)$ to $\left(\hat{X}_2,C_2\right)$.
\end{prop}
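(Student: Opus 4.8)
The plan is to recognise covariance intersection as affine interpolation in the \emph{natural} (canonical) coordinates of the Gaussian exponential family, and then to invoke the standard fact that the exponential connection on a statistical manifold is flat, so that its geodesics are exactly the straight lines in those coordinates. Thus the ``geodesic'' in the statement should be read as the $e$--geodesic dual to the Fisher metric, \emph{not} the Levi--Civita geodesic of the Fisher metric itself: indeed, on the example $C_1=C_2=I$ with $\hat{X}_1\neq\hat{X}_2$ one checks directly from \eqref{eq:ci} that covariance intersection traces the straight segment at constant covariance, which is an $e$--geodesic but visibly not a Fisher--Rao geodesic (the Fisher metric restricted to a mean--line is hyperbolic, whose geodesics are not horizontal).

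First I would write a non-degenerate Gaussian $N(\hat{X},C)$ in canonical exponential-family form with sufficient statistics $(x,xx^T)$, reading off the natural parameter
\begin{equation}
\theta(\hat{X},C)=\left( C^{-1}\hat{X},\ -\tfrac{1}{2}C^{-1}\right).
\end{equation}
Next I would recall that in these coordinates the $e$--connection has vanishing Christoffel symbols, so that its geodesics are precisely the curves
\begin{equation}
\theta(t)=(1-t)\,\theta\!\left(\hat{X}_1,C_1\right)+t\,\theta\!\left(\hat{X}_2,C_2\right),\qquad t\in[0,1].
\end{equation}

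The crux is then a short algebraic verification that the family $\left(\hat{X}_a,C_a\right)_\omega$ of \eqref{eq:ci} realises this straight line with $t=\omega$. The second (precision) component is immediate, being the defining relation $C_a^{-1}=(1-\omega)C_1^{-1}+\omega C_2^{-1}$. For the first component I would multiply the defining relation for $\hat{X}_a$ on the left by $C_a^{-1}$; the factors $C_a$ cancel and yield
\begin{equation}
C_a^{-1}\hat{X}_a=(1-\omega)\,C_1^{-1}\hat{X}_1+\omega\,C_2^{-1}\hat{X}_2,
\end{equation}
which is exactly the interpolation of the first natural coordinate. Hence $\theta\!\left(\hat{X}_a,C_a\right)_\omega=\theta(\omega)$, the endpoints $\omega=0,1$ recover $\left(\hat{X}_1,C_1\right)$ and $\left(\hat{X}_2,C_2\right)$, and the curve is the asserted geodesic.

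The main obstacle is conceptual rather than computational: one must correctly identify the connection whose geodesic is intended, since the naive guess (the metric geodesic) is false. Once the $e$--connection is singled out, its flatness renders the geodesic equation trivial and the remaining work is the one-line computation above. A residual technical point is to confirm that the interpolated point stays on the manifold, i.e.\ that $C_a^{-1}$ is positive-definite for every $\omega\in[0,1]$; this holds because the interpolation is a convex combination of two elements of the convex cone of positive-definite matrices, so the family is well defined throughout.
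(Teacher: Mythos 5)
Your proposal is correct, but it takes a genuinely different route from the paper's. The paper proves the proposition via two lemmas in the appendix: first, a direct computation (reading off the precision matrix as the negative Hessian of the log-density and the mode as its stationary point) showing that the covariance intersection of two unnormalized Gaussians $p$ and $q$ is exactly the geometric mixture $p^{1-\omega}q^{\omega}$; second, a variational lemma showing that the normalized geometric mixture is the unique minimizer of the functional $(1-\omega)\,\mathrm{KL}(g\|p)+\omega\,\mathrm{KL}(g\|q)$, so that the ``geodesic'' is characterized as the one-parameter family of weighted Kullback--Leibler barycenters. You instead pass to the natural coordinates $\theta=\bigl(C^{-1}\hat{X},\,-\tfrac12 C^{-1}\bigr)$ of the Gaussian exponential family, observe that the $e$--connection is flat there so its geodesics are affine segments, and verify in one line that the covariance intersection equations \eqref{eq:ci} are precisely linear interpolation of $\theta$. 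The two arguments land on the same curve, since the geometric-mixture family $p^{1-\omega}q^{\omega}$ within an exponential family is exactly the straight line in natural parameters; but your version buys something the paper's does not: it makes explicit \emph{which} connection's geodesic is meant. Section~\ref{SS:InfoGeo} introduces only the Fisher information metric, which invites the (false) reading that the Levi--Civita geodesic is intended, and your counterexample with $C_1=C_2=I$ cleanly rules that out. The paper's variational lemma, for its part, buys an operational interpretation (the fused estimate as the KL-optimal compromise) that your coordinate argument does not supply. Your closing remark on positive-definiteness of $C_a^{-1}$ addresses a well-definedness point the paper passes over silently.
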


\begin{figure}
\centering
\psfrag{x}[c]{\small $p(x)$}
\psfrag{y}[l]{\small $p(y)$}
\psfrag{d}[c]{$\trr_\omega$}
\psfrag{z}[c]{\small $p(x \trr_\omega y)$}
\psfrag{a}[c]{$x$}
\psfrag{b}[c]{$y$}
\psfrag{c}[l]{$x \trr_\omega y$}
\psfrag{u}[c]{$\mu$}
\psfrag{v}[c]{$\Sigma$}
\psfrag{t}[c]{\emph{interaction}}
\psfrag{m}[c]{$M$}
\includegraphics[width=0.9\textwidth]{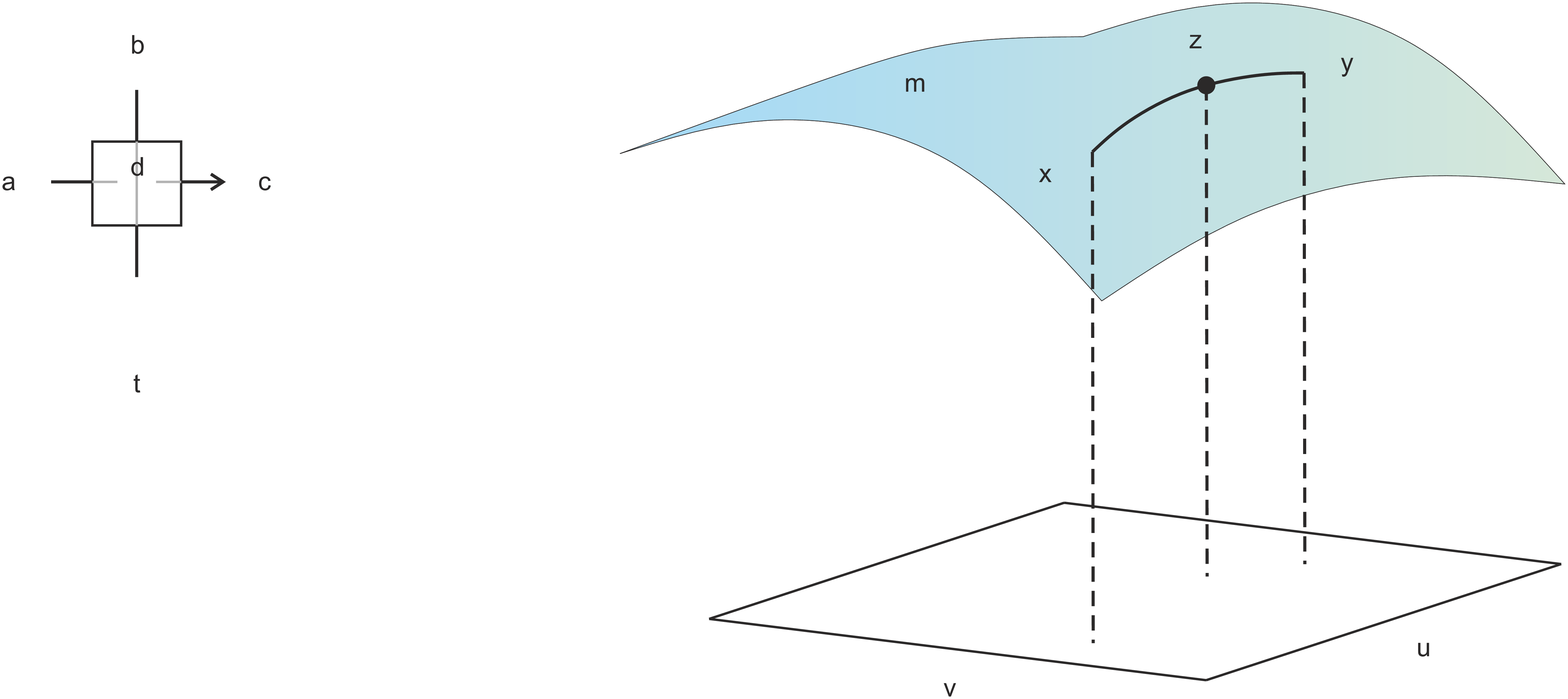}
\caption{\label{fig:manifold} \small A basic information fusion operation (left) and its geometric representation (right). Here $p(x)$ denotes a Gaussian distribution with mean $\hat{X}$ and covariance matrix $C$, which represents a point in the manifold.}
\end{figure}

\subsection{Properties of covariance intersection}\label{SS:CIProperties}

Covariance intersection has the following properties, which are evident from the geometric interpretation and can also be derived directly from~\eqref{eq:ci}:

\begin{description}
\item[Idempotence]
\[ a\trr_\omega a =a \qquad \text{for any estimator $a$ and for any $\omega\in(0,1)$.}\]
\item[Reversibility]
The function $\bullet \trr_\omega b$ from the statistical manifold to itself is a bijection. Thus, an estimate $a$ can uniquely be recovered from $b$, $a\trr_\omega b$, and $\omega$.
\item[No double counting] For any estimators $a$, $b$, and $c$ and for any weights $\omega,\omega^\prime\in (0,1)$, we have:
\begin{equation}
    (a\trr_{\omega^\prime} b)\trr_\omega c= (a\trr_\omega c)\trr_{\omega^\prime} (b\trr_\omega c) \enspace .
    \end{equation}
\noindent Thus, the redundant appearance of $c$ in $a\trr_\omega c$ and in $b\trr_\omega c$ counts only once towards the final result. This is the key property of covariance intersection that makes it insensitive to the dependence of different estimators.
\end{description}

\section{Coloured tangle diagrams in low dimensional topology}\label{S:LDT}

This section provides a new strand in our tale. Thus, we shall put aside all that came before, and we shall begin afresh.

\subsection{Colouring knots and tangles}\label{SS:Knots}

The fundamental problem in knot theory is to distinguish knots, which are smooth embeddings $K\colon\, S^1\rightarrow S^3$ of a directed circle into the $3$--sphere, considered up to an equivalence relation called \emph{ambient isotopy}. Identifying $S^3\simeq \mathds{R}^3\cup \{\infty\}$ and choosing a projection of the knot onto a generic plane in $\mathds{R}^3$ disjoint from the knot, we may represent $K$ as a \emph{knot diagram} such as one of those drawn in Figure~\ref{F:knot}. \emph{Reidemeister's Theorem} tells us that two knot diagrams represent ambient isotopic knots if and only if they are related by a finite sequence of the three \emph{Reidemeister Moves} of Figure~\ref{F:ReidemeisterMoves}.

\begin{figure}
\renewcommand{\thesubfigure}{\Alph{subfigure}}
\centering
\begin{subfigure}{.32\textwidth}
  \centering
  \psfrag{a}[c]{}
\includegraphics[width=0.67\textwidth]{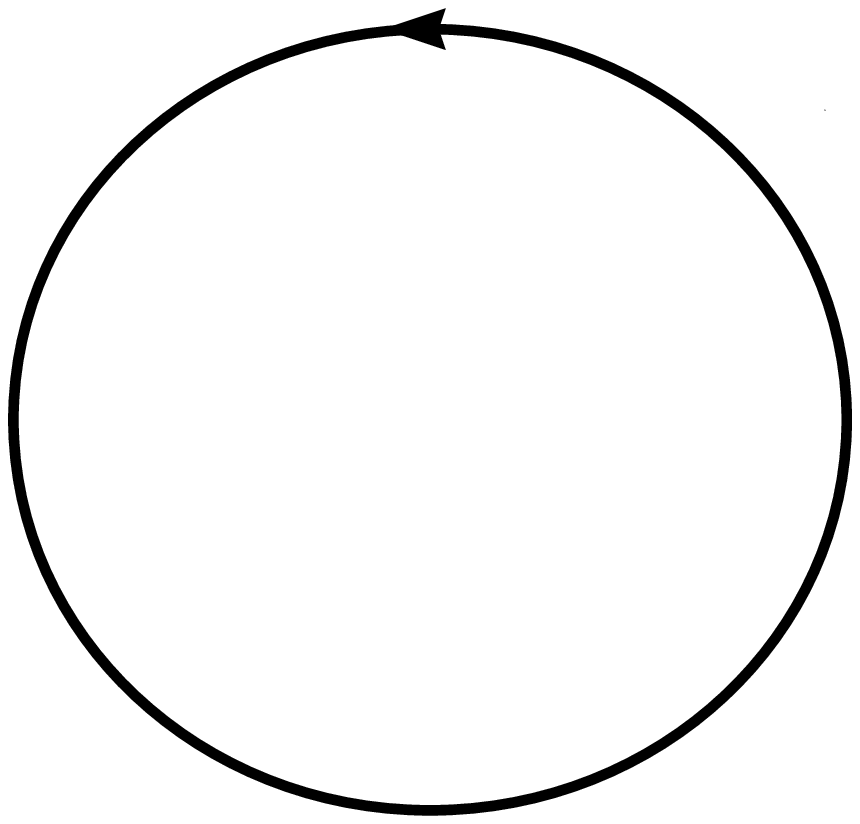}
\caption{\label{F:colknot01}}
\end{subfigure}%
\begin{subfigure}{.32\textwidth}
  \centering
  \psfrag{a}[c]{}
   \psfrag{b}[c]{}
  \psfrag{c}[c]{}
\includegraphics[width=0.8\textwidth]{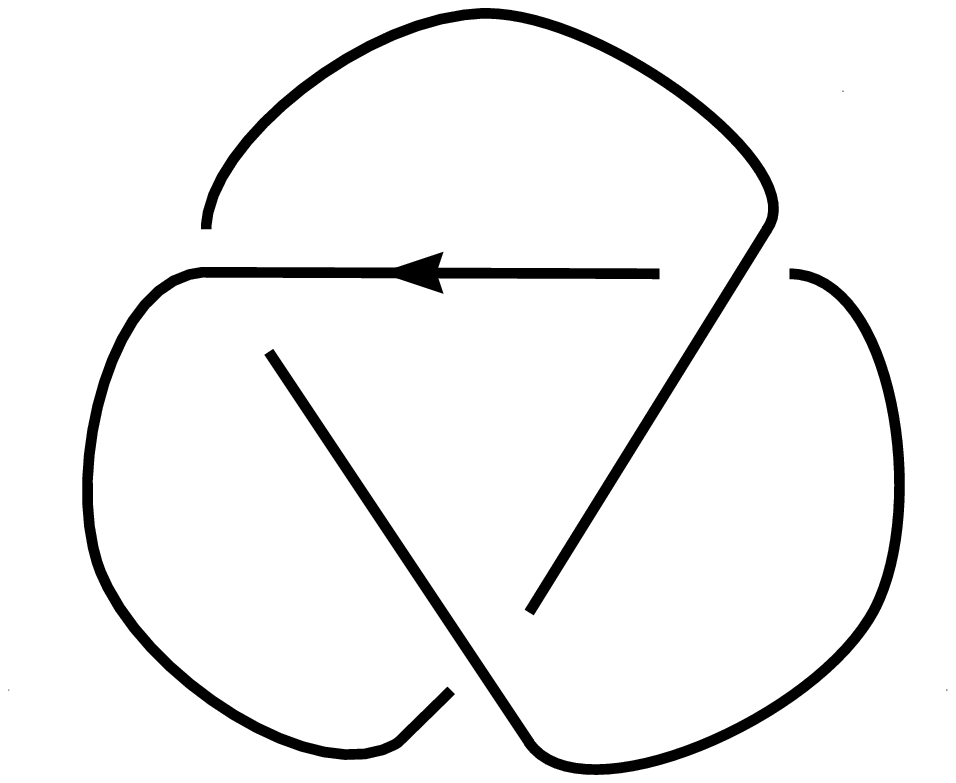}
\caption{\label{F:colknot02}}
\end{subfigure}
\begin{subfigure}{.32\textwidth}
  \centering
  \psfrag{1}[c]{}
   \psfrag{2}[c]{}
  \psfrag{3}[c]{}
  \psfrag{4}[c]{}
\includegraphics[width=0.8\textwidth]{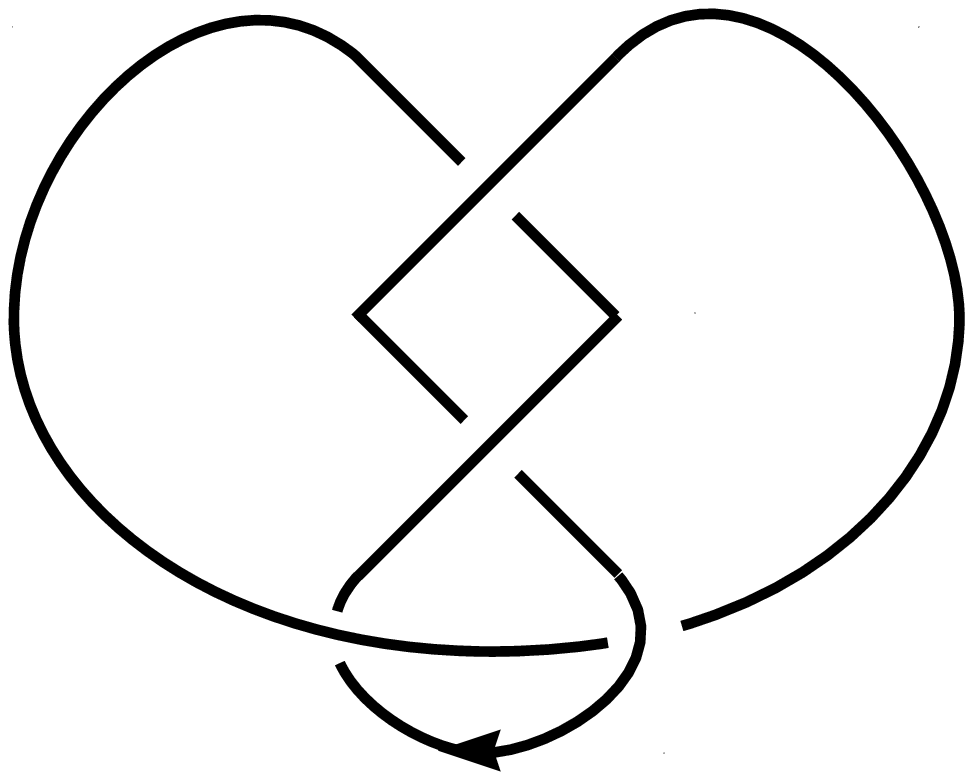}
\caption{\label{F:colknot03}}
\end{subfigure}
\caption{\label{F:knot} \small The unknot, the trefoil, and the figure-eight knot.}
\end{figure}

\begin{figure}
\psfrag{A}[r]{\small \emph{R2}}
\psfrag{B}[r]{\small \emph{R3}}
\psfrag{D}[r]{\small \emph{R1}}
\centering
\includegraphics[width=0.7\textwidth]{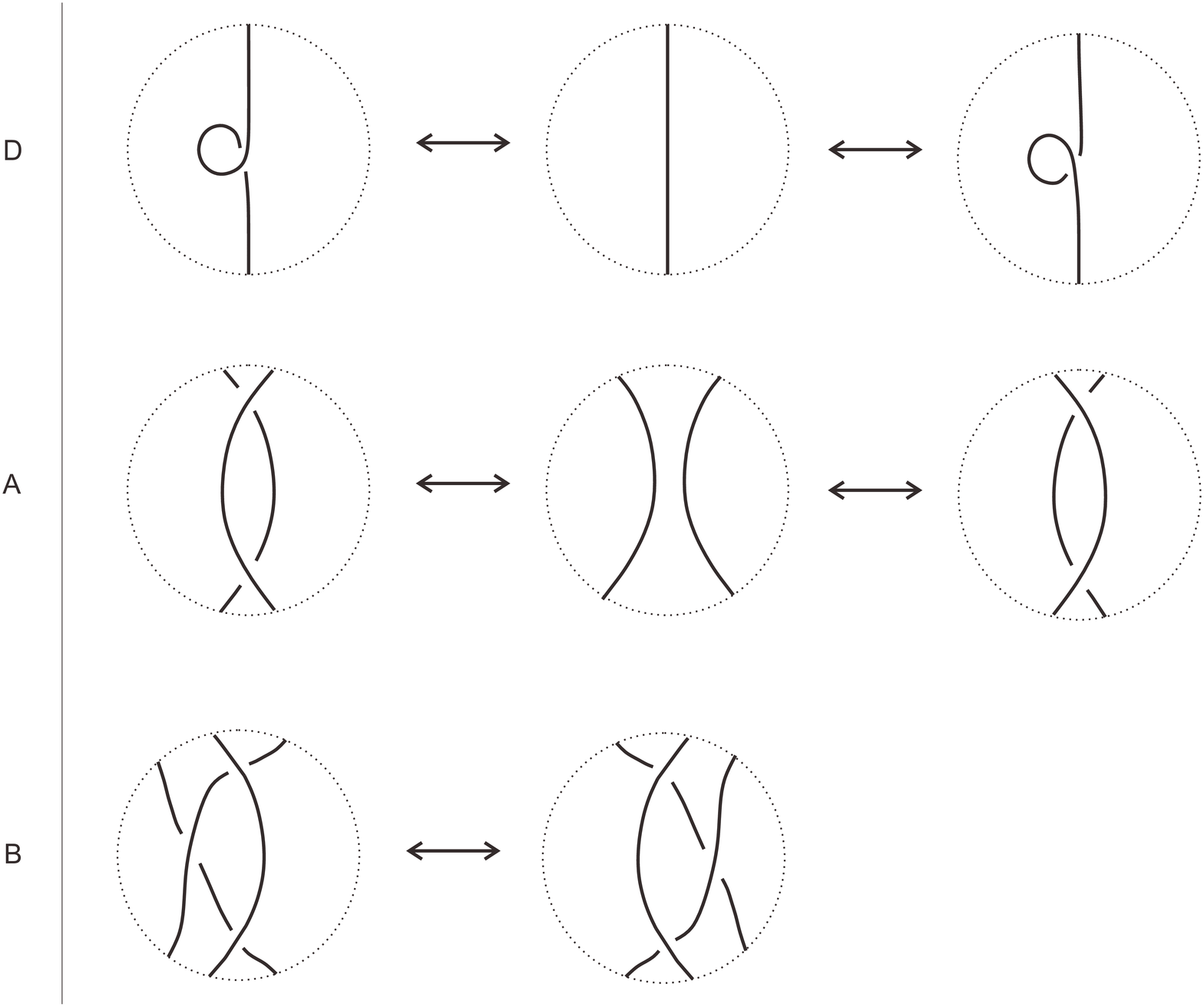}
\caption{\label{F:ReidemeisterMoves}The Reidemeister moves. Orientations are arbitrary. To execute a Reidemeister move, cut out a disc inside a tangle diagram containing one of the patterns above, and replace it with another disc containing the pattern on the other side of the Reidemeister move.}
\end{figure}

How are we to distinguish two non-ambient-isotopic knots? For example, how are we to distinguish the \emph{trefoil knot} from the \emph{unknot} in Figure~\ref{F:knot}? The original idea of Tietze~\cite{Tietze:08}, as explained later by Fox~\cite{Fox:61}, is to colour each arc in the knot diagram in one of three colours $\{\text{red, blue, green}\}$, represented numerically as $0$, $1$, and $2$, subject to the constraints that at least two different colours are used, and if two adjacent colours to a crossing are $i$ and $j$ then the third adjacent colour is $i\trr j \ass 2i-j\bmod 3$.

\begin{equation}
\psfrag{u}[r]{\small $x$}
\psfrag{y}[c]{\small $y$}
\psfrag{x}[l]{\small $x \trr y$}
\includegraphics[width=0.1\textwidth]{cross.eps}
\end{equation}

The trefoil can be coloured according to these rules (Figure~\ref{F:colknot}), but the unknot and the figure-eight knot cannot. This property of being \emph{$3$--colourable} is preserved under Reidemeister moves (Figure~\ref{F:ReidemeisterMoves1} will later show a more general statement), and thus we have shown that the trefoil is knotted and is not ambient isotopic to the figure-eight knot. But how to show that the figure-eight knot is knotted? Simple. Take five colours $0$, $1$, $2$, $3$, and $4$, and colour according to the same procedure. The figure-eight knot is $5$--colourable, but the trefoil and the unknot are not.

Generalizing, we arrive at the idea of defining an algebraic structure by which a knot can be coloured. Its elements are a set $Q$ and it comes equipped with an operation $\triangleright$ satisfying the following axioms:

\begin{description}
\item[Idempotence]
\[ x\trr x =x \qquad \forall \ x\in Q\enspace .\]
\item[Automorphism]
The function:
\begin{align}
\notag
\trr \, z \colon\ Q\  &\to \ Q\\
\notag
x\  &\mapsto \ x \trr z\enspace ,
\end{align}
\noindent is an automorphism of $(Q,\trr)$ for all $z\in Q$. In particular it is a bijection of sets and:
\begin{equation}
    (x\trr y)\trr z= (x\trr z)\trr (y\trr z) \qquad \forall x,y,z\in Q\enspace .
    \end{equation}
\end{description}

A structure $(Q,\triangleright)$ is called a \emph{quandle} (\textit{e.g.}~\cite{Joyce:82}). Several examples will be given in Section~\ref{SS:Info}.

\begin{figure}
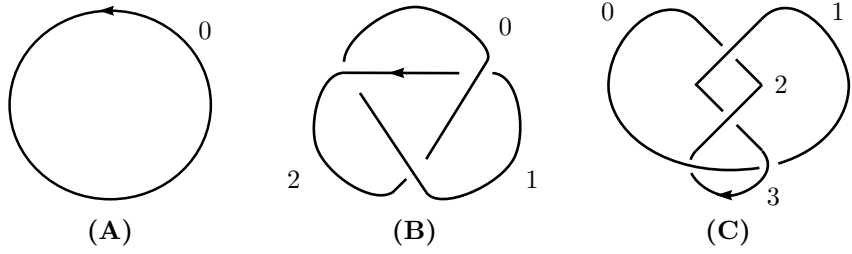

\renewcommand{\thesubfigure}{\Alph{subfigure}}
\centering
\begin{subfigure}{.32\textwidth}
  \centering
  \psfrag{a}[c]{$0$}
\includegraphics[width=0.67\textwidth]{unknotcol}
\caption{\label{F:colknot1}}
\end{subfigure}%
\begin{subfigure}{.32\textwidth}
  \centering
  \psfrag{a}[c]{$0$}
   \psfrag{b}[c]{$1$}
  \psfrag{c}[c]{$2$}
\includegraphics[width=0.8\textwidth]{trefoilcol}
\caption{\label{F:colknot2}}
\end{subfigure}
\begin{subfigure}{.32\textwidth}
  \centering
  \psfrag{1}[c]{$0$}
   \psfrag{2}[c]{$1$}
  \psfrag{3}[c]{$2$}
  \psfrag{4}[c]{$3$}
\includegraphics[width=0.8\textwidth]{figure8col}
\caption{\label{F:colknot3}}
\end{subfigure}
\caption{\label{F:colknot} \small A trivial colouring of the unknot, a $3$--colouring of the trefoil, and a $5$--colouring of the figure-eight knot. The quandle operation is $i\trr j = 2j-i \bmod p$.}
\end{figure}

\subsection{Generalizing coloured knots}\label{SS:generalize}

We did not use all of the properties of a knot when defining a quandle colouring. As far as colourings are concerned, several properties of a knot were irrelevant. Let us relax these, with a view to motivating tangle machines in Section~\ref{SS:Info}.

\begin{itemize}
\item We did not care about the number of connected components or whether they were closed. Quandle colourings are defined for \emph{tangles}, that are \emph{concatenations} of crossings, \textit{i.e.} shapes composed by scattering crossings in the plane and connecting up endpoints via disjoint line segments (Figure~\ref{F:comb} shows how a knot diagram can be constructed this way).Two tangles are equivalent if they are related by Reidemeister moves. See Figure~\ref{F:tangle}.
\item  We did not care about planarity. In terms of colourings, nothing is lost by allowing the connecting segments to intersect transversely. Relaxing the planarity requirement gives rise to \emph{virtual tangles}.
\item The orientations of the arcs were only important to give directionality to the quandle operation--- to specify which the `input underarc' is, versus the `output underarc'. There is no reason for directions of overcrossing arcs to match up; a global direction on an arc plays no role. Restricting orientations to overcrossing arcs and making them local introduces `disorientations' into our virtual tangles (\textit{e.g.}~\cite{ClarkMorrisonWalker:09})
\end{itemize}

\begin{figure}
\includegraphics[width=0.7\textwidth]{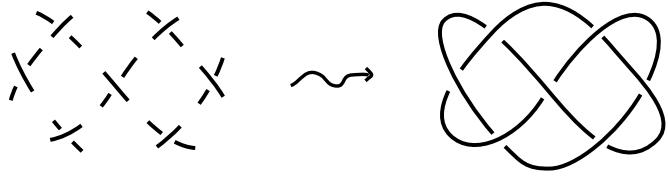}
\caption{\label{F:comb} A knot formed by concatenating crossings. Orientations ignored.}
\end{figure}

\begin{figure}
\renewcommand{\thesubfigure}{\Alph{subfigure}}
\centering
\begin{subfigure}{.41\textwidth}
\psfrag{a}{}\psfrag{d}{}
  \centering
\includegraphics[width=\textwidth]{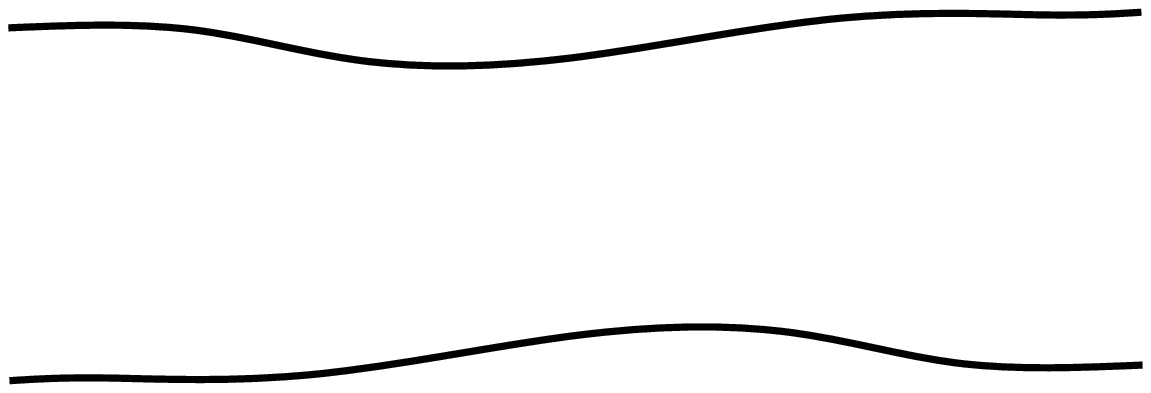}
\caption{\label{F:tangle1}}
\end{subfigure}%
\quad\quad\quad
\begin{subfigure}{.41\textwidth}
\psfrag{g}{}\psfrag{h}{}
  \centering
\includegraphics[width=\textwidth]{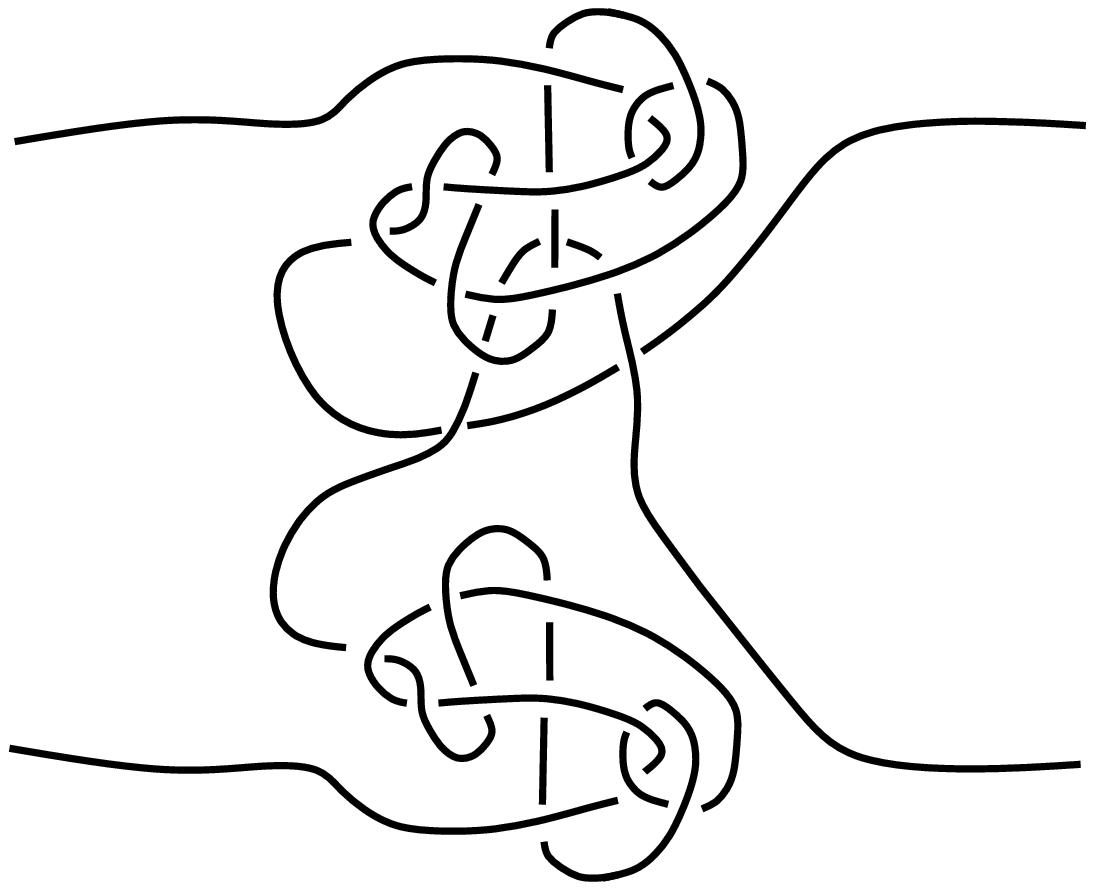}
\caption{\label{F:tangle2}}
\end{subfigure}
\caption{\label{F:tangle} \small Two equivalent tangles.}
\end{figure}

We will also want to consider quandles in a more general sense, in which $Q$ comes equipped with not one, but an entire family $B$ of binary operations, subject to the following axioms:

\begin{description}
\item[Idempotence]
\[ x\trr x =x \qquad \forall \ x\in Q \quad \forall \ \trr\in B\enspace .\]
\item[Automorphism]
The function:
\begin{align}
\notag
\trr \, z \colon\ Q\  &\to \ Q\\
\notag
x\  &\mapsto \ x \trr z\enspace ,
\end{align}
\noindent is an automorphism of $(Q,B)$ for all $z\in Q$ and for all $\trr\in B$. In particular each such function is a bijection of sets, and:
\begin{equation}
    (x\brr y)\trr z= (x\trr z)\brr (y\trr z) \qquad \forall\ x,y,z\in Q \quad\forall \trr,\brr\in B\enspace .
    \end{equation}
\end{description}

Such structures, perhaps subject to extra axioms which hold in our cases of interest, are variously called \emph{distributive $\Gamma$--idempotent right quasigroups}~\cite{Buliga:11a}, \emph{$G$--family of quandles} \cite{Ishii:12}, and \emph{multiquandles}~ \cite{Przytycki:11}. We continue to call them quandles.

We list some examples of quandles:

\begin{example}[Conjugation quandle]
Colours might be elements of a group $\Gamma$, and the operation might be conjugation:
\begin{equation}
x\brr y \ass y^{-1}xy\enspace .
\end{equation}
The pair $(\Gamma,\set{\brr})$ is called a \emph{conjugation quandle}. If the group $\Gamma$ is a dihedral group and we colour by reflections, we recover the notion of an $n$--colouring of a knot. \textit{e.g.}~\cite{Fox:61}.
\end{example}

\begin{example}[Linear quandle]\label{E:LinearQuandle}
Colours might be elements of a real vector space $Q$ and the operations might be convex combinations:
\begin{equation}
x\trr_\omega y \ass (1-\omega)x + \omega y \qquad \omega\in D\subseteq \mathds{R}\setminus\set{1}\enspace.
\end{equation}
The pair $\left(Q,\set{\trr_\omega}_{\omega\in D}\right)$ is called a \emph{linear quandle}.
\end{example}

\begin{example}[Loglinear quandle]\label{E:LogLinear}
In the same setting as Example~\ref{E:LinearQuandle}, consider the operations:
\begin{equation}
x\,\bar{\trr}_\omega\, y \ass x^{1-\omega}y^{\omega} \qquad \omega\in D\subseteq \mathds{R}\setminus\set{1}\enspace .
\end{equation}
The pair $\left(Q,\set{\bar{\trr}_\omega}_{\omega\in D}\right)$ is called a \emph{loglinear quandle}.
\end{example}



\subsection{Tangle machine definition}\label{SS:Info}

The attentive reader will have noticed that the properties of covariance intersection are similar to the quandle axioms. As we have seen, the set of points of a statistical manifold of Gaussian distributions satisfies the quandle axioms with respect to covariance intersection `choose a point on a geodesic' operations. In this section, we define tangle machines and colour them by estimators. Quandle operations will be covariance intersections.

The basic building block of tangle machine is called an \emph{interaction} (Figure~\ref{F:kebab}). It involves one strand coloured $y$ which we call a \emph{agent} and strands coloured $x_1,x_2,\ldots,x_k$ which we call \emph{patients}. We draw $y$ as a thick horizontal line, and we label line segments above $y$ by $x_1,x_2,\ldots,x_k$ correspondingly. An interaction has a weight $s$ associated to it, and the strands below $y$ are labeled by $x_1\trr_s y,x_2\trr_s y,\ldots x_k\trr_s y$ correspondingly. We may orient the strand labeled $y$ or all strands labeled $x$ (these two conventions are interchangeable). Sometimes we may be sloppy and orient both, in which case the orientations of the patients should be ignored.

\begin{figure}
\renewcommand{\thesubfigure}{}
\centering
\begin{subfigure}{.35\textwidth}
\psfrag{a}[c]{\small $y$}
\psfrag{b}[c]{\small $x_1$}
\psfrag{c}[r]{\small $x_1 \trr_s y$}
\psfrag{d}[c]{\small $x_2$}
\psfrag{e}[l]{\small $x_2 \trr_s y$}
\psfrag{f}[c]{\small $x_k$}
\psfrag{g}[c]{\small $x_k \trr_s y$}
\psfrag{h}[c]{\small $y$}
\psfrag{s}[c]{\small $\trr_s$}
\includegraphics[width=\textwidth]{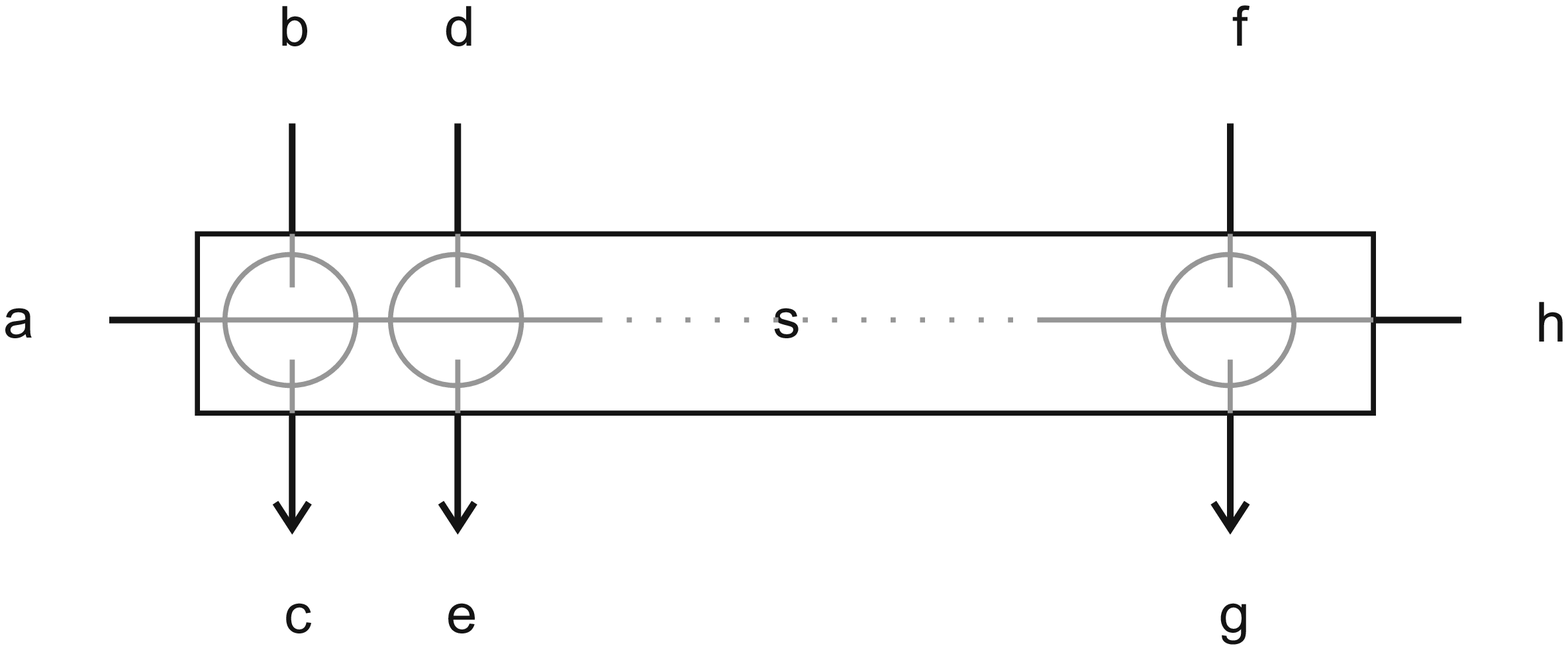}
\end{subfigure}\qquad\qquad\quad
\begin{subfigure}{.35\textwidth}
\includegraphics[width=\textwidth]{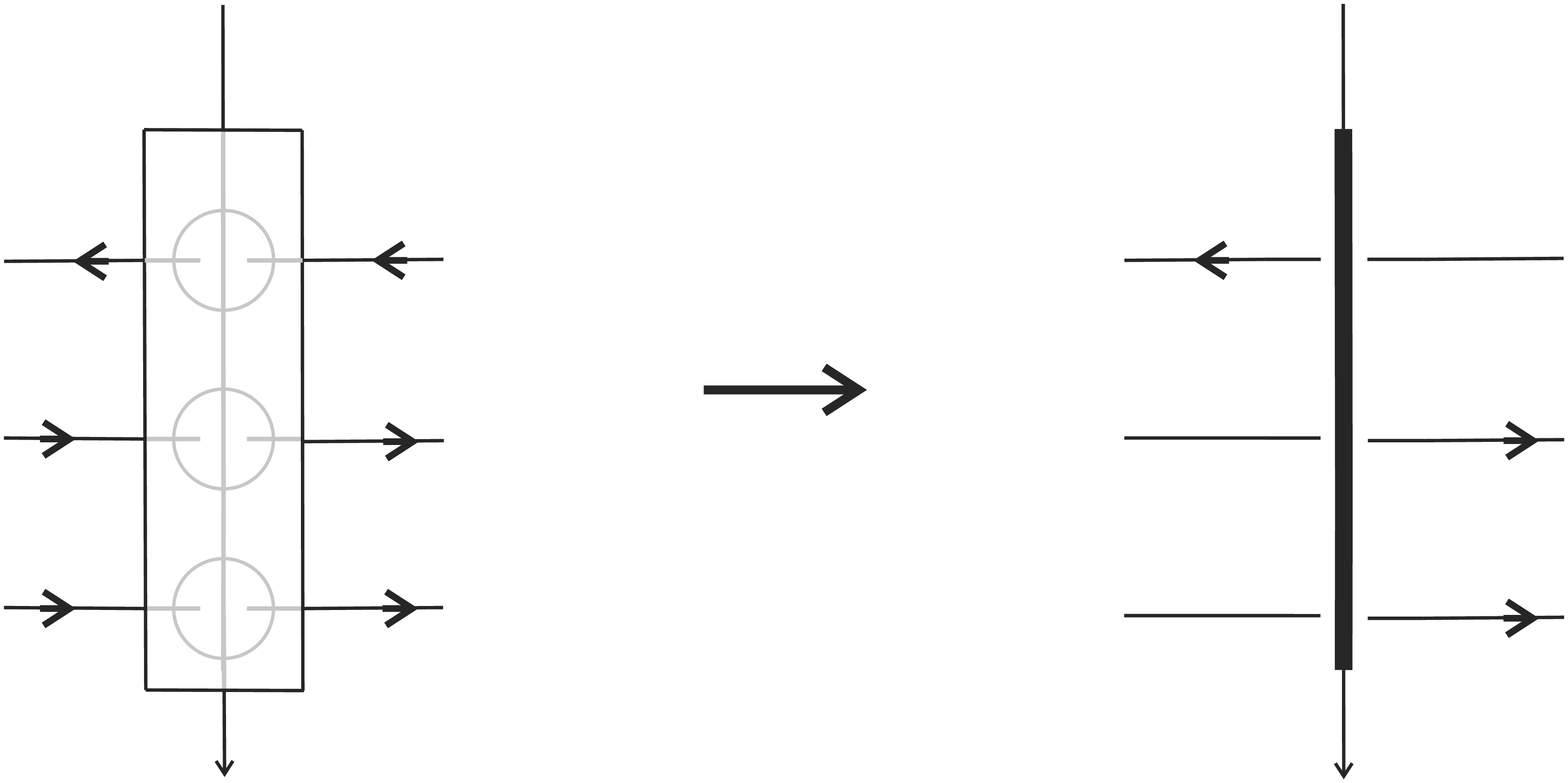}
\end{subfigure}
\caption{\label{F:kebab} An interaction. We usually draw agents as thick lines.}
\end{figure}

Interactions may be concatenated as shown in Figure~\ref{F:concat}. When labels are estimators, the concatenation of interactions represents a pair of related choices of points on geodesics, as shown in Figure~\ref{fig:manifold1}.

One reason that we allow multiple input patients for interactions is that the covariance intersection parameter $\omega$ is an estimated quantity. In general it will be different for different estimator fusions. When we know that it is the same, this information is important for us and we would like to record it. We typically represent agents with multiple agents by thickening the overarc.

\begin{figure}
\centering
\includegraphics[width=0.9\linewidth]{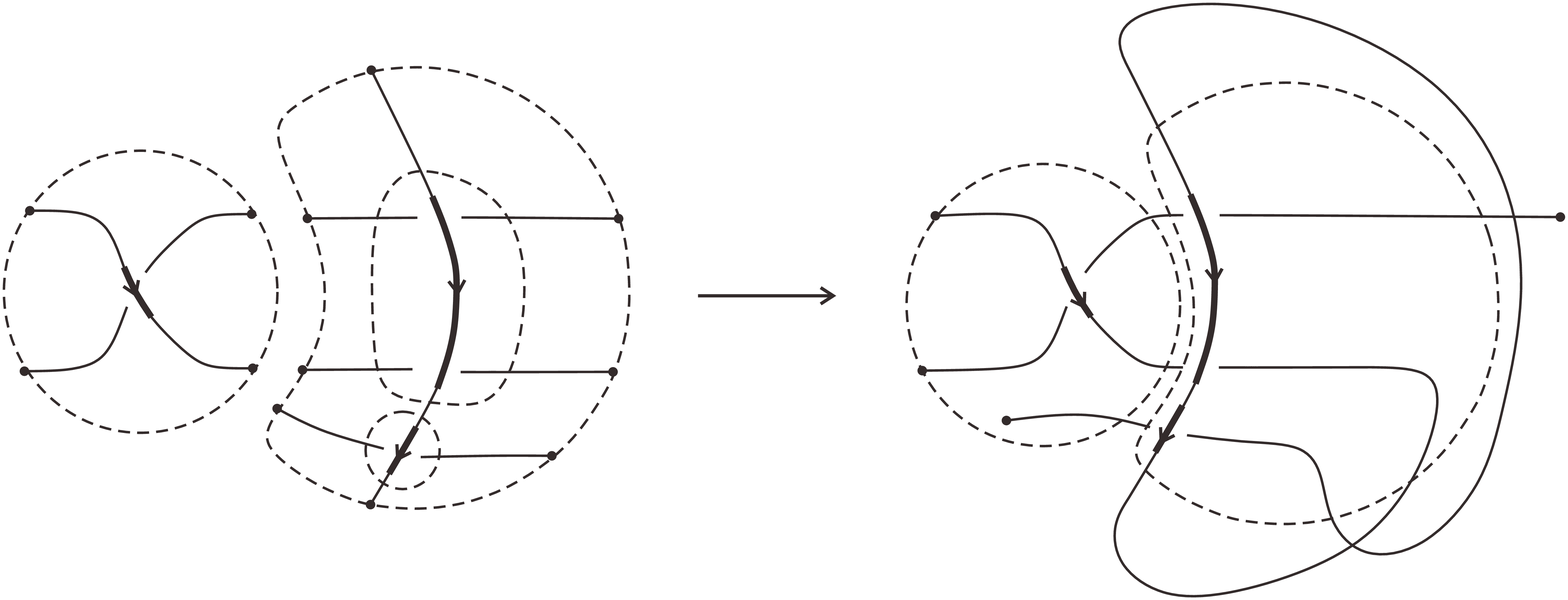}
\caption{\label{F:concat} Concatenation. Endpoints can be concatenated if they share the same colours (colours are suppressed in the above figure). At the third stage, note that in our formalism, only agents are oriented, and no compatibility requirement is imposed. The concatenation line chosen is arbitrary, and in particular it may intersect other concatenation lines. }
\end{figure}

\begin{figure}
\centering
\psfrag{x}[c]{\small $p(x)$}
\psfrag{y}[l]{\small $p(y)$}
\psfrag{f}[c]{\small $p(z)$}
\psfrag{h}[c]{\small $z$}
\psfrag{d}[c]{$\trr_s$}
\psfrag{e}[l]{$\trr_t$}
\psfrag{z}[c]{\small $p(x \trr_s y)$}
\psfrag{a}[c]{\small $x$}
\psfrag{b}[c]{\small $y$}
\psfrag{c}[c]{\small \; $x \trr_s y$}
\psfrag{k}[c]{\small $z \trr_t (x \trr_s y)$}
\psfrag{g}[c]{\small $p(z \trr_t (x \trr_s y))$}
\psfrag{u}[c]{$\mu$}
\psfrag{v}[c]{$\Sigma$}
\psfrag{t}[c]{\small \emph{tangle machine}}
\includegraphics[width=0.9\textwidth]{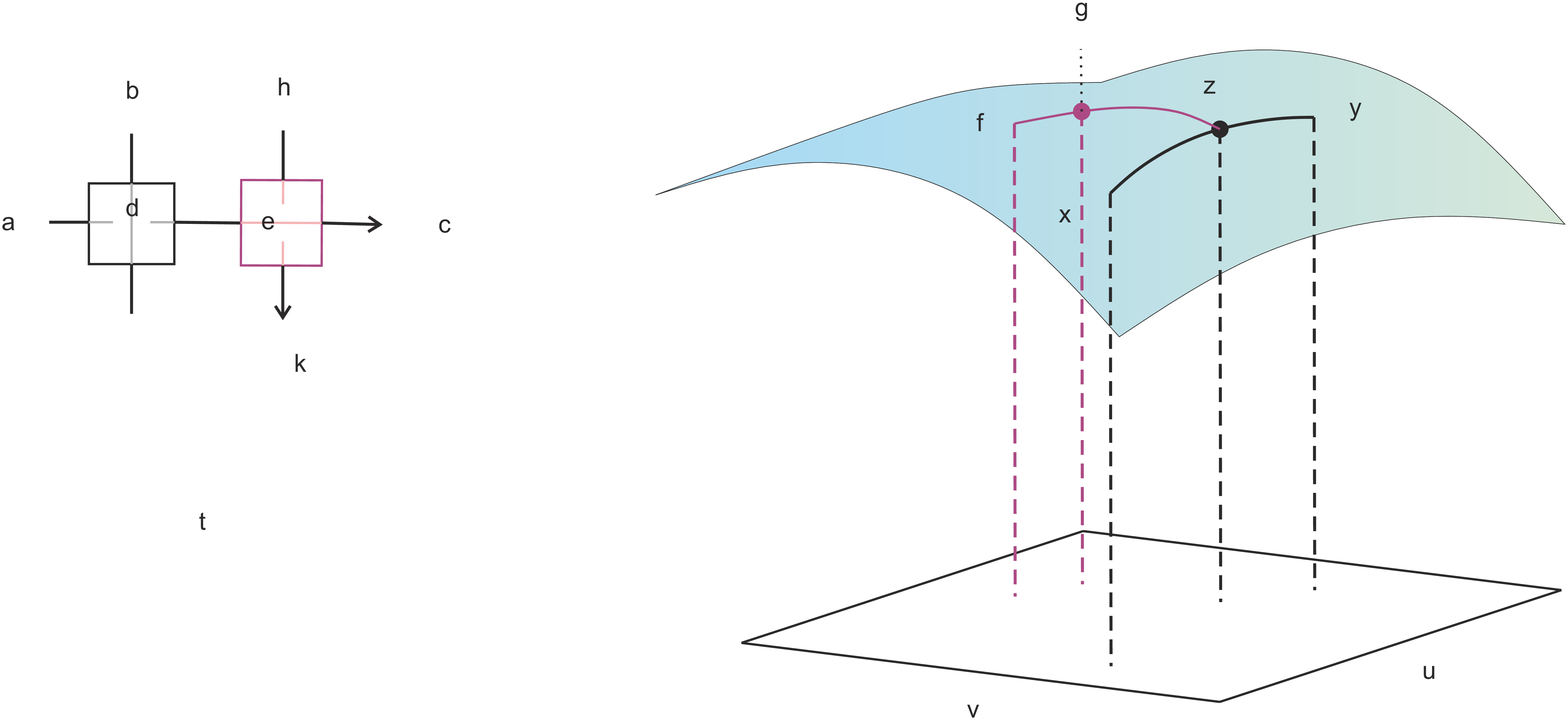}
\caption{\label{fig:manifold1} \small An information fusion network -- a tangle machine (left) and its geometric representation on a statistical manifold (right).}
\end{figure}

The lines used for concatenation do not matter, and the equivalence relation on diagrams generated by different choices of concatenating lines, and changes of local orientation, and adding or removing agents where `nothing happens', is illustrated in Figure~\ref{F:local_moves_machines}. The first two moves allow us to ignore orientations of agents when they don't matter.

Reidemeister moves for machines are illustrated in Figure~\ref{F:ReidemeisterMoves1}, where the general form of R3 is obtained in Figure~\ref{F:R3-box} where we first replace a box by a thickened line for convenience. We follow this convention for the remainder of the paper.

\begin{figure*}
\renewcommand{\thesubfigure}{R\arabic{subfigure}}
\begin{subfigure}{.4\textwidth}
  \centering
  \psfrag{s}[c]{$\trr_s$}
  \psfrag{a}[c]{$x$}
  \psfrag{x}[c]{\underline{$x$}}
  \psfrag{b}[c]{\underline{$x\trr_s x$}}
\includegraphics[width=0.8\textwidth]{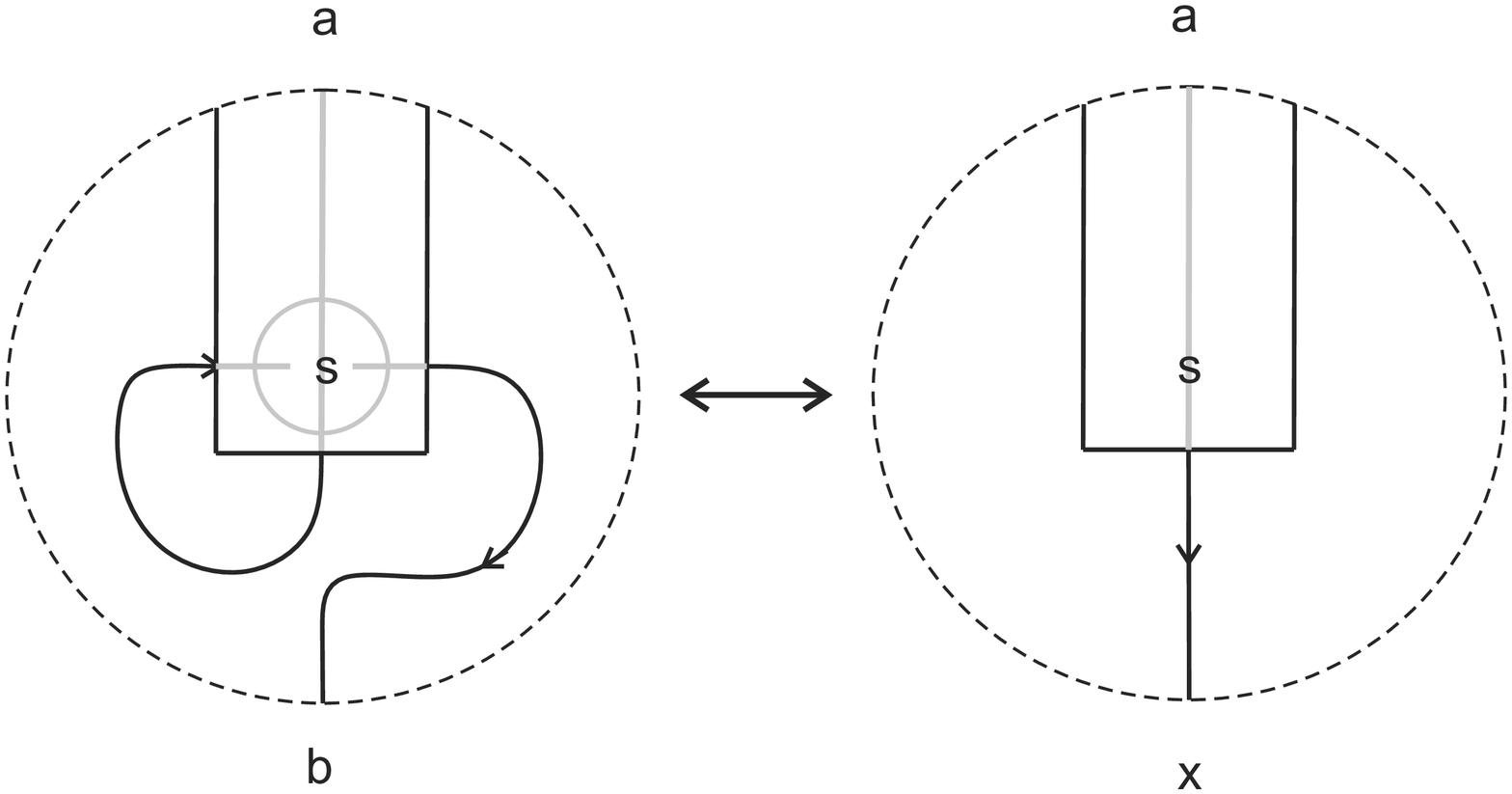}%
\caption{\label{F:r1}}
\end{subfigure}%
\quad\quad\quad
\begin{subfigure}{.5\textwidth}
  \centering
  \psfrag{c}[c]{$y$}\psfrag{a}[c]{$x$}\psfrag{x}[c]{$x$}\psfrag{d}[r]{\underline{$(x \trr_s y) \rrt_s y$}}\psfrag{b}[ld]{$x\trr_s y$}\psfrag{m}[c]{\underline{$a$}}
  \psfrag{s}[c]{$\trr_s$}
\includegraphics[width=0.8\textwidth]{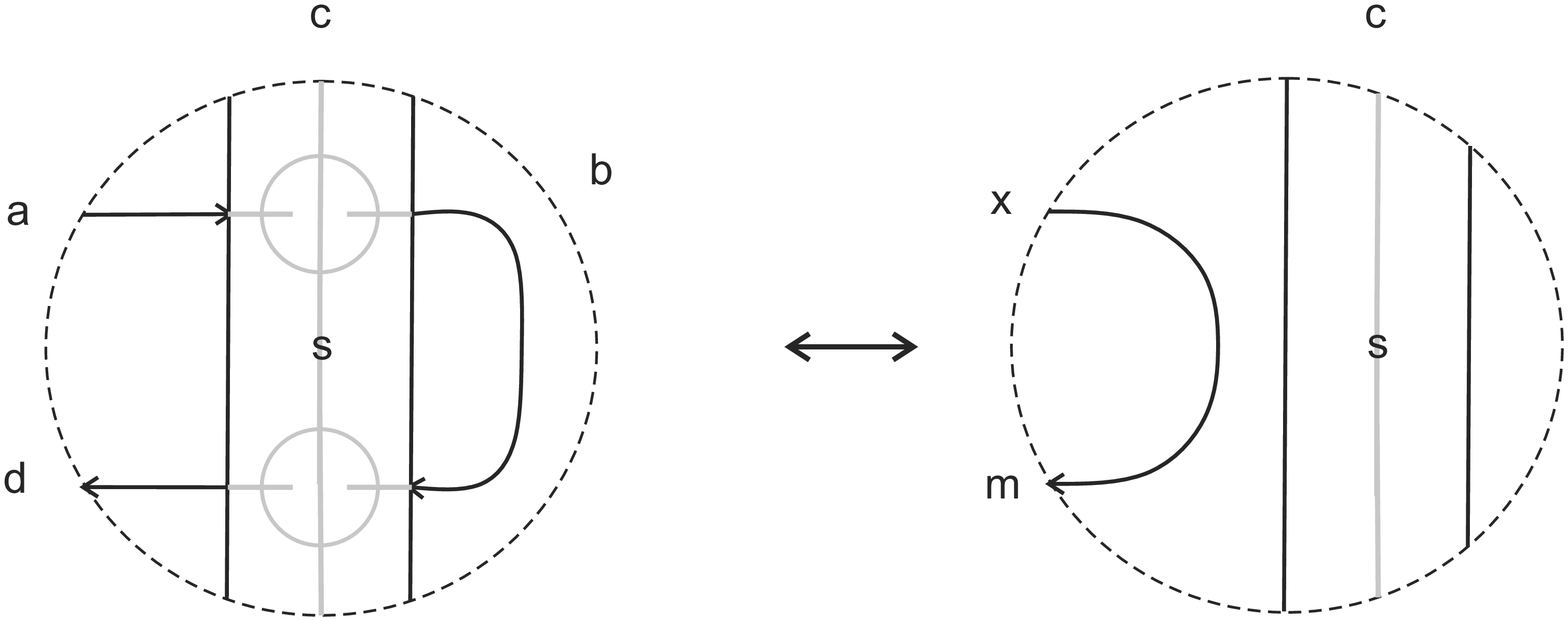}
\caption{\label{F:r2}}
\end{subfigure}%
\\[0.5cm]
\scalebox{0.99}{\begin{subfigure}{.95\textwidth}
  \centering
  \psfrag{s}[c]{$\trr_s$}
\psfrag{t}[c]{$\trr_t$}
\psfrag{a}[ld]{\underline{$(x\trr_s y)\trr_t z$}}\psfrag{b}[l]{$y\trr_t z$}
\psfrag{1}[c]{$y$}\psfrag{0}[c]{$x$}\psfrag{2}[c]{$z$}\psfrag{c}[ld]{\underline{$(x\trr_t z)\trr_s(y\trr_t z)$}}\psfrag{d}[l]{$y\trr_t z$}
\scalebox{0.9}{\includegraphics[width=0.8\textwidth]{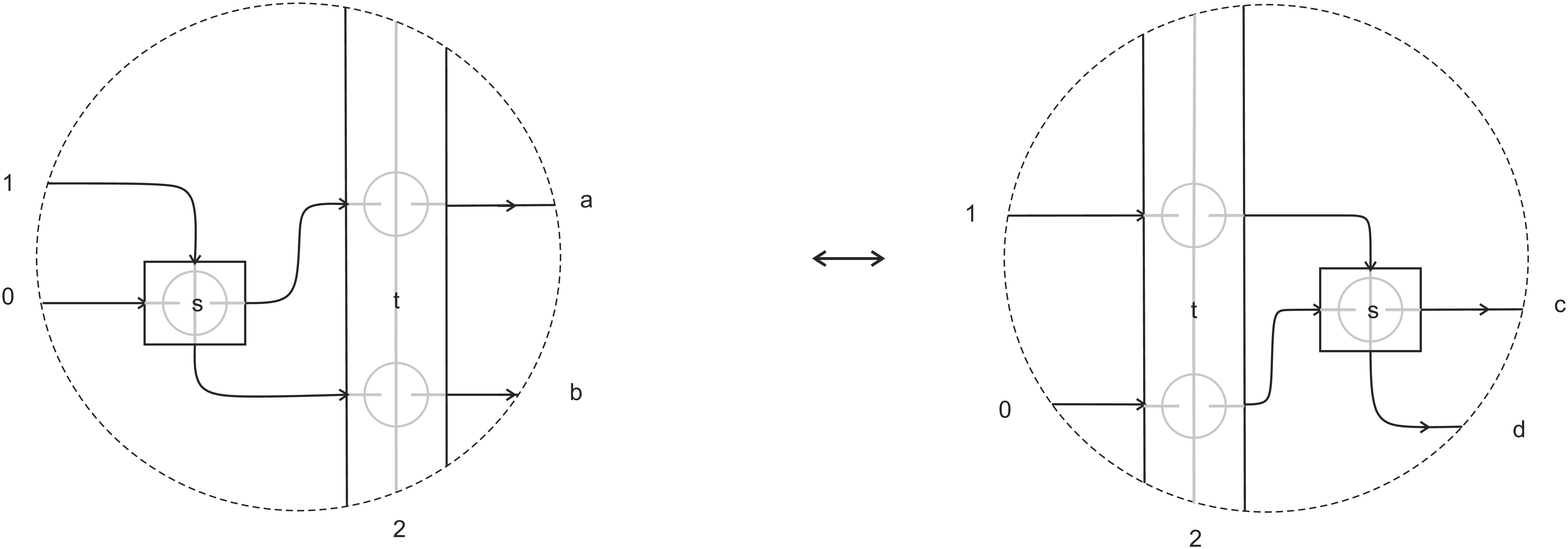}}
\caption{\label{F:r3}}
\end{subfigure}}
\caption{\label{F:ReidemeisterMoves1} \small The Reidemeister Moves, which are local modifications of information fusion networks. The moves are considered for all orientations on all edges. The properties of information fusion guarantees that the underlined colours on the \textsc{LHS} and on the \textsc{RHS} are equal and thus that these local moves are well-defined. In the R2 move, $\rrt$ represents the implicitly defined inverse operation to $\trr$. Only a special case of R3 is illustrated above--- the general case is given by Figure~{\ref{F:R3-box}}.}
\end{figure*}

\begin{figure*}
\centering
\begin{minipage}{2in}
\psfrag{a}[c]{\textbf{(R3)}}
\includegraphics[width=2in]{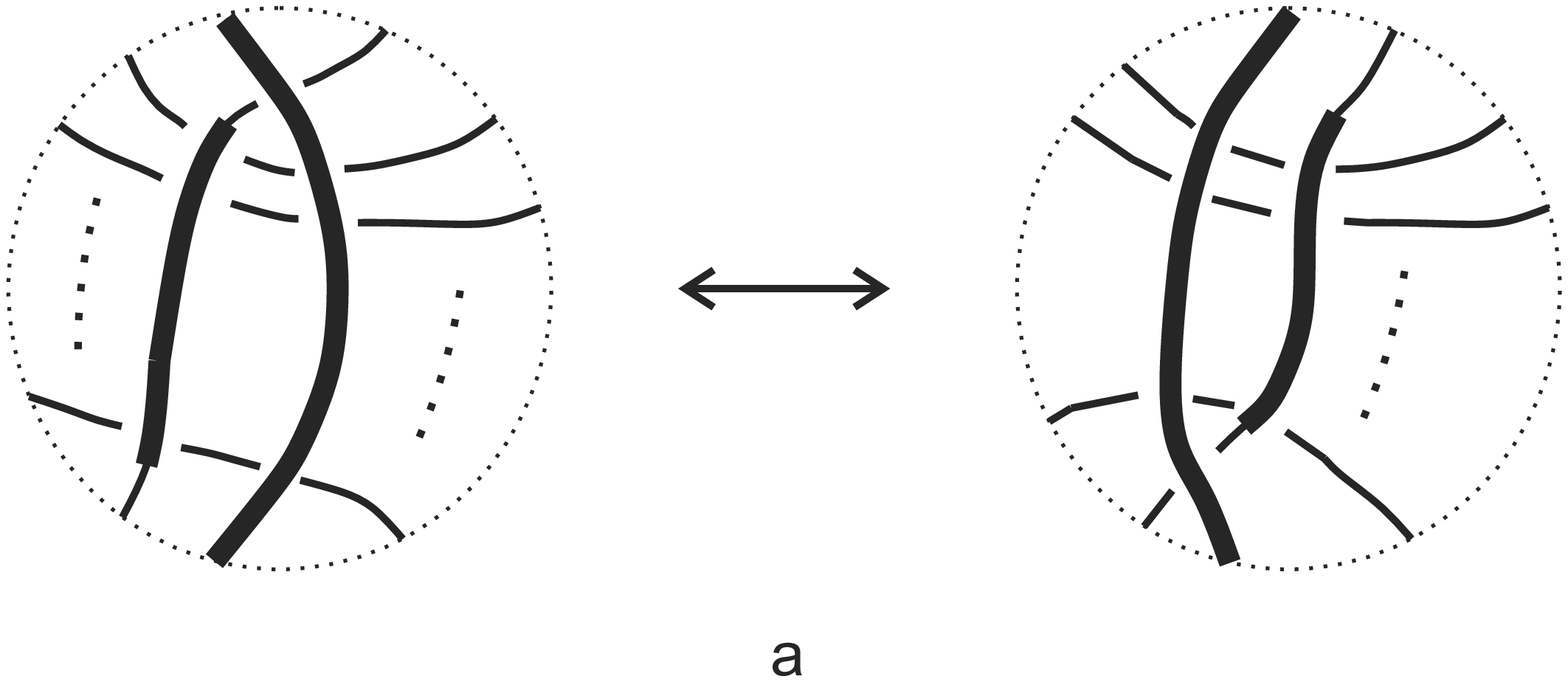}
\end{minipage}
\caption{\label{F:R3-box} \small The general R3 move.}
\end{figure*}

\begin{figure}[htb]
\centering
\psfrag{a}[c]{\small $\trr$}
\psfrag{b}[c]{\small $\rrt$}
\psfrag{V}[c]{\small \emph{I1}}
\psfrag{x}[c]{\small $x$}
\psfrag{T}[c]{\small \emph{VR1}}\psfrag{R}[c]{\small \emph{VR2}}\psfrag{S}[c]{\small \emph{VR3}}
\psfrag{Q}[c]{\small \emph{SV}}\psfrag{D}[c]{\small \emph{I2}}\psfrag{E}[c]{\small \emph{FM1}}\psfrag{F}[c]{\small \emph{FM2}}\psfrag{C}[c]{\small \emph{I3}}\psfrag{Y}[c]{\small \emph{ST}}
\psfrag{X}[c]{\small \emph{ST}}
\includegraphics[width=0.85\textwidth]{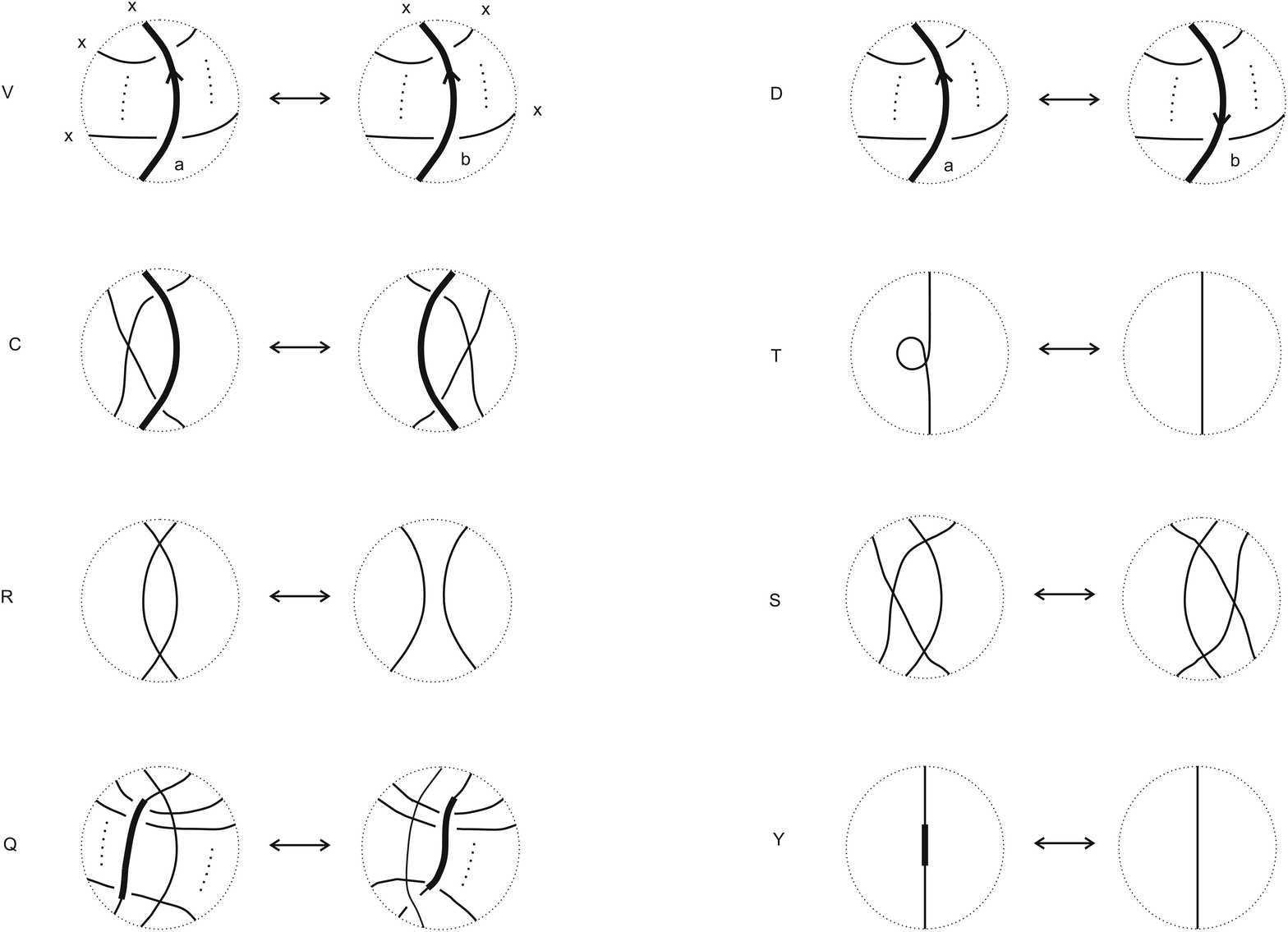}
\caption{\label{F:local_moves_machines} \small Cosmetic moves for machines. Where directions are not indicated, the meaning is that the move is valid for any directions, and the same for colourings. Here, $\rrt$ denotes the (implicitly defined) inverse operation to $\trr$.}
\end{figure}

One notion of tangle machine computation is that it is a choice of arcs as \emph{input registers} and another choice of arcs as \emph{output registers}. Local moves may not involve these registers. If the colours of the input registers uniquely determine the colours of the output registers, then we consider this determination as a computation. We may think of interactions as being analogues to that of logic gates. But because interaction concatenation is non-sequential (an interaction may even be concatenated to itself), this notion of computation is also non-sequential, and a tangle machine computation may be thought of as taking place `all at once' via an oracle.

When interactions represent choice of point on a geodesic between the input patient and the agent on a statistical manifold, machines represent a network of these.

\section{Tangle machines coloured by Hamiltonians}\label{S:Quantum}

\subsection{Tangle machines representing piecewise adiabatic quantum computations}

Quantum information is based on quantum probability theory, in which \emph{density matrices} take the place of classical probability distributions. A set of density matrices may be given the structure of a statistical manifold. A tangent vector to this manifold is a \emph{Hamiltonian matrix}, which physically describes the dynamics of a quantum system.

One research direction would be to colour tangle machines by density matrices and to fuse them using points on the geodesic between them on the statistical manifold. Instead, we colour our machines by Hamiltonians and we consider the linear quandle with operations indexed by a time parameter $s\in (0,1)$~\cite{CarmiMoskovich:15b}.

Adiabatic quantum computation (\textsc{AQC}) considers evolving Hamiltonians of the form $(1-s)H_0 + sH_1$  to perform quantum computation~\cite{Farhi:2000}.
Analogously to~\eqref{eq:ci} in which the statistical moments of two estimators are fused, the interaction with output:
\begin{equation}
\label{eq:ev}
H_s = H_0 \trr_s H_1 \ass (1-s) H_0 + s H_1 \enspace ,
\end{equation}
may be viewed as a fusion of two statistical ensembles underlying two quantum systems. As before the parameter $s \in (0,1)$ determines the contribution of
each component to the new ensemble, $H_s$.

The quantum mechanical counterpart of independence of estimators is that the Hamiltonians commute, $[H_0, H_1] = 0$, in which case the
underlying density matrices satisfy:
\begin{equation}
\rho_s \propto \exp\left(-\beta H_s\right) = \exp\left(- \beta H_0\right)^{1-s}
\exp\left(-\beta H_1 \right)^s \propto \rho_0^{1-s} \rho_1^s\enspace .
\end{equation}
\noindent This is analogous to the formula for a geodesic on a statistical manifold in Section~\ref{S:CI}.

The idea of \textsc{AQC} is to evolve a Hamiltonian $H_0$ representing a problem whose solution (the ground state) is easy, into a different and perhaps more complicated Hamiltonian $H_1$ whose ground state is the solution to the computational task of interest. The computation initializes the system in its ground state, the ground state of $H_0$, and then slowly evolves its Hamiltonian to $H_1$. This process is called \emph{quantum annealing}. By the Adiabatic Theorem, the system remains in its ground state throughout the evolution process, and the computation concludes at the ground state of $H_1$, that is the sought-after solution. The \emph{computing time} of an \textsc{AQC} is inversely proportional to the square of the minimal energy gap between the ground state and the rest of the spectrum, namely to the square of $g \ass \lambda_1 - \lambda_0$, where $\lambda_{i+1} \geq \lambda_i$ are the underlying energy eigenvalues of the Hamiltonian. Our objective is thus to maximize the minimal $g$ along the evolution path or to minimize the computation time.

As shown in a previous paper~\cite{CarmiMoskovich:15b} and as we will again illustrate below, a tangle machine may be used to represent a piecewise-defined evolution path of a Hamiltonian. It is well-known that nonlinear evolution paths speed up quantum computation \cite{Farhi:2002,Farhi:2011}. The topology of the machine itself is only of interest if we can measure its colours at intermediate times and at intermediate arcs. The technological barrier to implementation is the \emph{no-cloning theorem} which tells us that we cannot copy a state, and therefore we have lost it the moment we have measured it. From a future technological perspective we might envision a solution in which concatenation of interactions occurs by means of \emph{quantum teleportation} of an entire Hamiltonian from one interaction to the next. In such a case the tangle machine representation of an \textsc{AQC} would have all of the same features as a tangle machine representation of a classical information fusion network. Without such technology, the tangle machines below merely tell us a tale of Hamiltonian evolution.

We define one additional piece of structure. Choose a \emph{time deformation parameter} $\alpha \in (0,1)$ and define the normalized time variables:
\begin{equation}
t_i = \min \{1, \alpha^{-i} t \}, \quad i=0,1,2,\ldots
\end{equation}
with $t \in (0,1)$. A time parameter $t_i$ is associated with the \emph{$i$-th time-frame}, and by convention all interactions within this `time-frame' (drawn as a region in the plane) are labeled $t_i$. For small $\alpha$, the computations in the $k$-th time-frame nearly terminate when the computations in the $k-1$ frame begin. In the limit $\alpha \to 0$ we recover sequential computation.

Our diagrams are drawn along a time axis showing the time-frames. Interactions appearing in a particular time-frame, say the $k$-th one, are understood to be executed using the respective time parameter, \textit{i.e.} $\trr_{t_k}$. In what follows we use the shorthand $\trr_k$ for $\trr_{t_k}$.

\subsection{Time-Entanglement asymmetry}\label{SS:TimeEntanglement}

In this section we describe a computation involving a single pair of qubits in which a product state evolves into an entangled state.
As we shall see, the computation time is proportional to the amount of entanglement.

We begin with a standard adiabatic computation
\begin{equation}
O(t) = H_0 \trr_0 H_1 \enspace ,
\end{equation}
where the initial and problem Hamiltonians are given by:
\begin{subequations}
\begin{equation}
H_0 = P_x^0 \trr_{\oplus} P_x^1 \ass (1-a) (P_x^0 \otimes 1) + a (1 \otimes P_x^1)\enspace ;
\end{equation}
\begin{equation}
H_1 = 1 - \left(\lambda \bra 00 \ket + \sqrt{1-\lambda^2} \bra 11 \ket\right)\left(\lambda \bra 00 \ket + \sqrt{1-\lambda^2} \bra 11 \ket\right)^{\dagger}\enspace ;
\end{equation}
\end{subequations}
where $a,\lambda \in (0,1)$ are fixed parameters. The \emph{projectors} are defined as:
\begin{equation}
P_x^j \ass \frac{1}{2}(1 - (-1)^j X)\enspace ,
\end{equation}
where $X$ is the Pauli \textsc{X} matrix. The initial ground state is the product $\bra 01 \ket_x$ whereas the final ground state is a state
(written above in the computational basis \textsc{Z}) whose entanglement entropy is
\[S(\lambda) = -\lambda^2 \log (\lambda^2) - (1-\lambda^2) \log (1-\lambda^2)\enspace .\]
\noindent A maximally entangled state is obtained for $\lambda = \frac{1}{\sqrt{2}}$.

Figure~\ref{fig:time_ent} graphs computation time against an increasing $\lambda$ for two different values of $a$.

\begin{figure}[htb]
\centering
\includegraphics[width=0.65\textwidth]{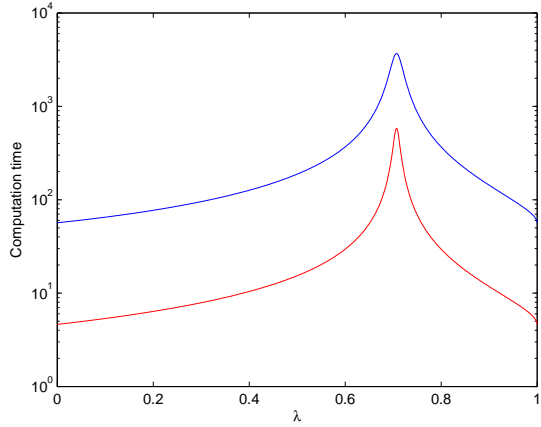}
\caption{\small Computation time against $\lambda$ for two values of $a$: (Blue) $a=0.95$, and (Red) $a=0.5$.}
\label{fig:time_ent}
\end{figure}

Let's solve the same problem using a tangle machine and time deformation. We choose time variables as:
\begin{equation}
t_1 \ass a \cdot \min\{1, \alpha^{-1} t\}, \qquad t_0 = t\enspace .
\end{equation}

Let us solve the same computational problem in two different ways, represented by the tangle machines in Figure~\ref{fig:entang}:
\begin{subequations}
\begin{equation}
O_1(t) = \left[(P_x^0 \otimes 1) \trr_1 (1 \otimes P_x^1)\right] \trr_0 H_1 \enspace ;
\end{equation}
\begin{equation}
O_1^\prime(t) = \left[(P_x^0 \otimes 1) \trr_1 H_1\right] \trr_0 \left[(1 \otimes P_x^1) \trr_1 H_1 \right]\enspace .
\end{equation}
\end{subequations}

The computation time is graphed with respect to $\lambda$ in Figure~\ref{fig:time_ent1}. The performance of the standard \textsc{AQC} from Figure~\ref{fig:time_ent} is also shown for comparison. Here $\alpha=0.5$.

\begin{figure}[htb]
\centering
\psfrag{0}[r]{\small $P_x^0 \otimes 1$}
\psfrag{1}[r]{\small $1 \otimes P_x^1$}
\psfrag{2}[c]{\small $H_1$}
\psfrag{a}[c]{\small $O_1$}
\psfrag{b}[c]{\small $O_2$}
\psfrag{c}[c]{\small $O_3$}
\psfrag{t}[c]{\small $t_1$}
\psfrag{s}[c]{\small $t_0$}
\psfrag{e}[c]{\small $O_1^\prime$}
\psfrag{f}[c]{\small $O_2^\prime$}
\psfrag{g}[c]{\small $O_3^\prime$}
\includegraphics[width=0.4\textwidth]{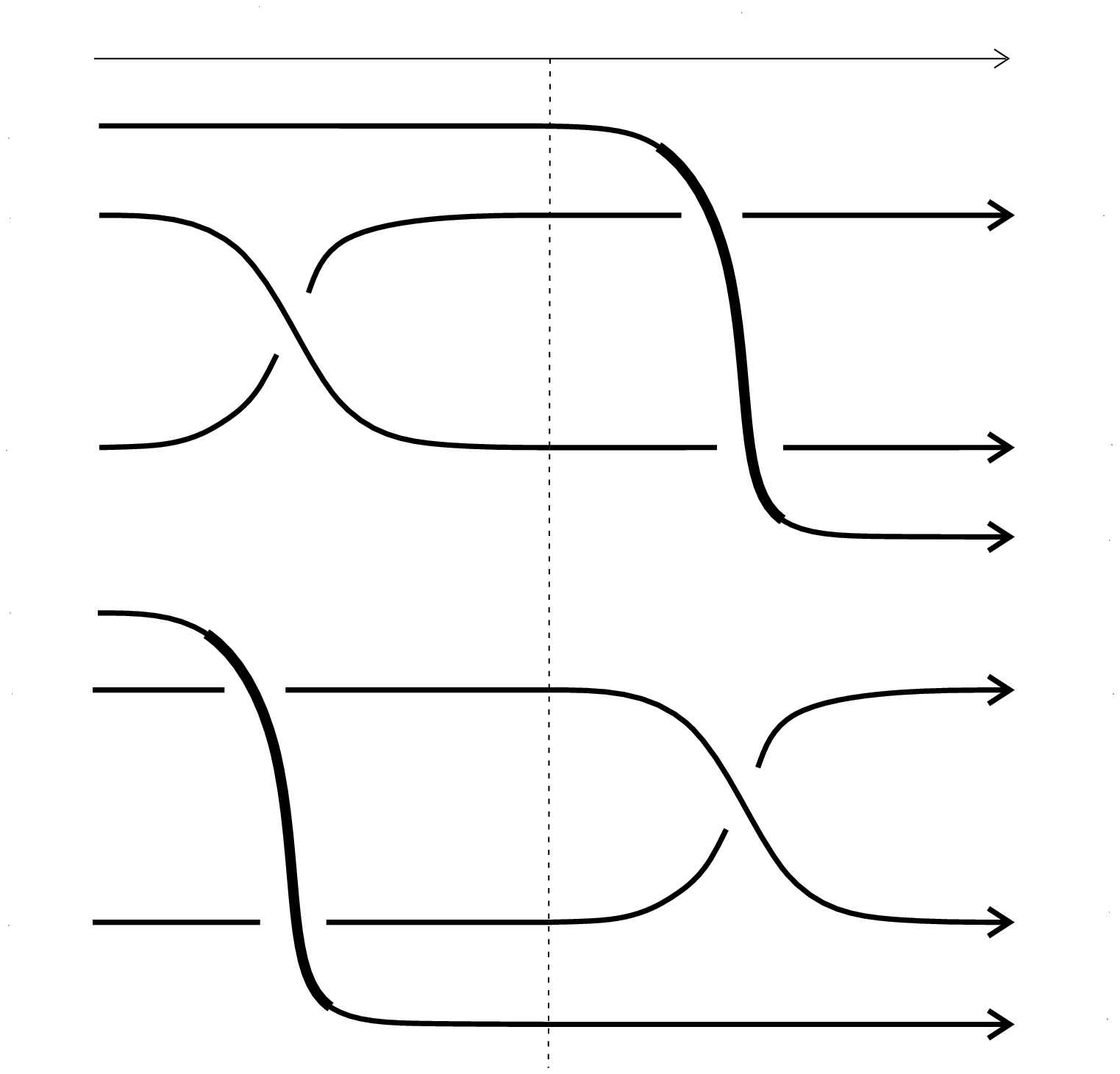}
\caption{\small Equivalent computations.}
\label{fig:entang}
\end{figure}

\begin{figure}[htb]
\centering
\includegraphics[width=0.65\textwidth]{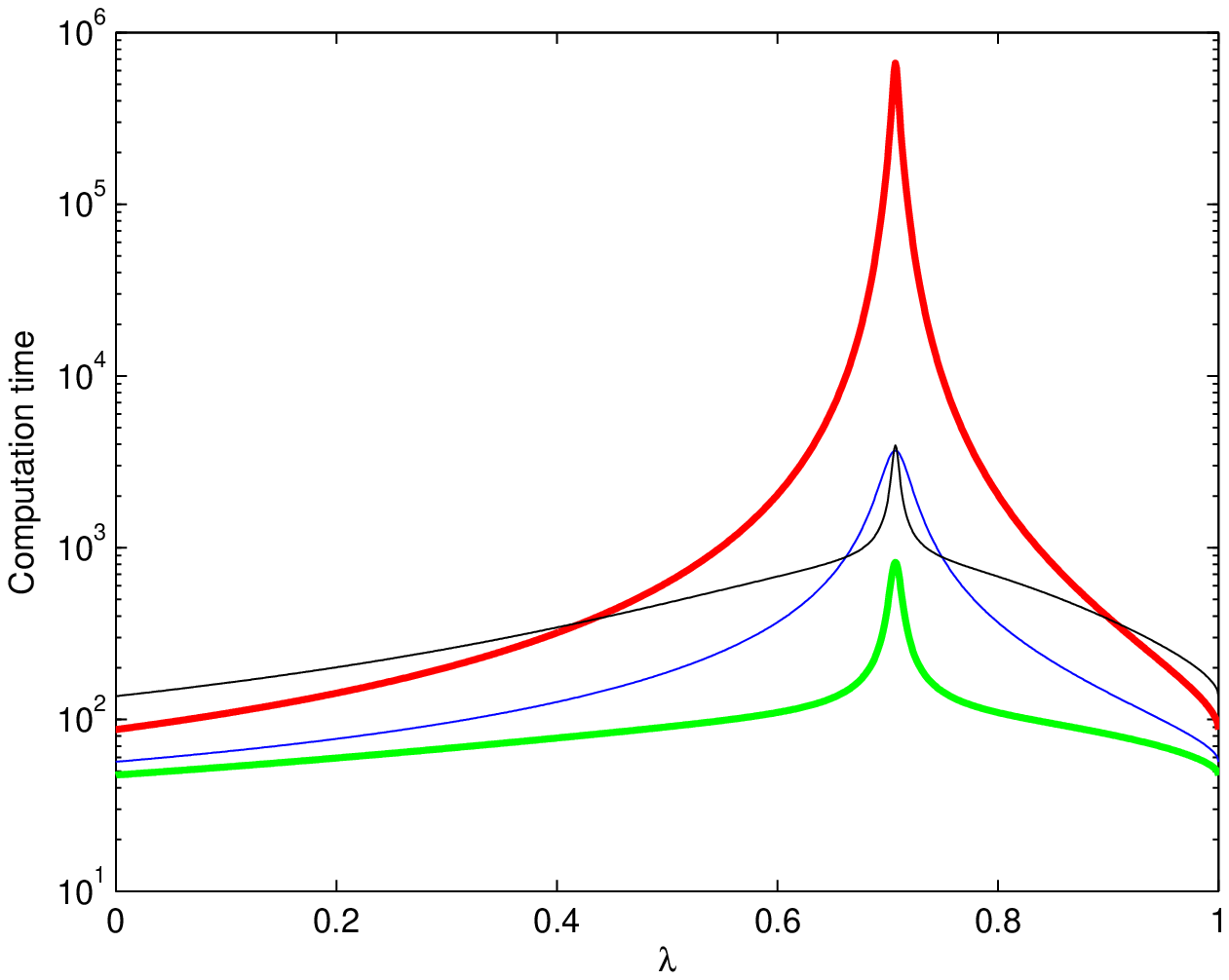}

\caption{\small Computation time against $\lambda$ for $a=0.95$. Showing time-deformed \textsc{AQC} (with $\alpha=0.5$) \; $O_1^\prime(t)$ (thick red), \; $O_1(t)$ (thick green), \; and standard \textsc{AQC} $H_0 \trr_0 H_1$ (thin blue). The black line corresponds to no time deformation, i.e. $\left[(P_x^0 \otimes 1) \trr_{\star} (1 \otimes P_x^1)\right] \trr_0 H_1$, where $\trr_{\star}$ is executed using a time variable $\min\{a, t_0\}$.}
\label{fig:time_ent1}
\end{figure}

The difference between the two time-deformed computations, $O_1(t)$ and $O_1^\prime(t)$, is evident. Figure~\ref{fig:time_ent11} illustrates this further by
showing the gap between the computation time of either schemes with respect to $\lambda$. We see that the discrepancy between the
two computations reaches a peak for a maximally entangled final ground state.

\begin{figure}[htb]
\centering
\includegraphics[width=0.65\textwidth]{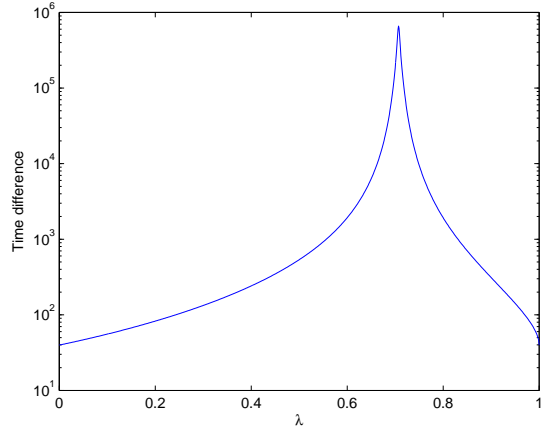}
\caption{\small Difference in computation time between the two time-deformed computations, $O_1(t)$ and $O_1^\prime(t)$. $\alpha=0.5$.}
\label{fig:time_ent11}
\end{figure}

The performance of the time-deformed algorithms for $a=0.5$ and $\alpha=0.1$ are shown in Figure~\ref{fig:time_ent12}. This time the computation $O_1^\prime(t)$ has a clear advantage over $O_1(t)$ and the standard \textsc{AQC} as the amount of entanglement reaches its peak, around $\lambda=1/\sqrt{2}$.

\begin{figure}[htb]
\centering
\includegraphics[width=0.65\textwidth]{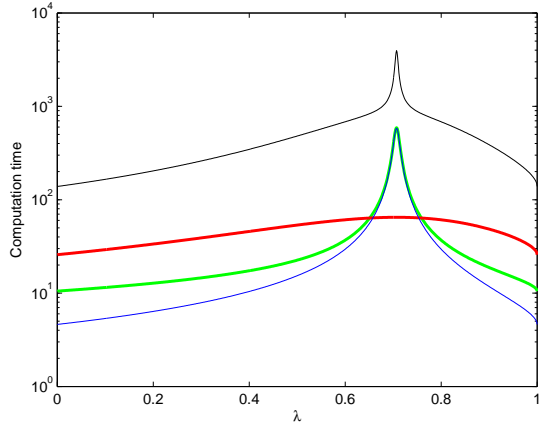}
\caption{\small Computation time against $\lambda$ for $a=0.5$. Showing time-deformed \textsc{AQC} (with $\alpha=0.1$) \; $O_1'(t)$ (thick red), \; $O_1(t)$ (thick green), \; and standard \textsc{AQC} (thin blue). The black line corresponds to no time deformation, \textit{i.e.} $\left[(P_x^0 \otimes 1) \trr_{\star} (1 \otimes P_x^1)\right] \trr_0 H_1$, where $\trr_{\star}$ is executed using a time parameter $\min\{a, t_0\}$.}
\label{fig:time_ent12}
\end{figure}

\subsection{2-\textsc{SAT} example}\label{SS:2sat}

In this example we consider a Boolean 2-\textsc{SAT} problem over four variables (qubits). We wish to find a satisfiable assignment
for the following disjunctive normal form (\textsc{DNF}) expression
\begin{equation}
\label{eq:dnf}
((x_1 \wedge x_2) \vee (\neg x_1 \wedge \neg x_2)) \wedge ((x_3 \wedge x_4) \vee (\neg x_3 \wedge \neg x_4))\enspace .
\end{equation}
There are four such assignments, namely, $(0,0,0,0)$, $(0,0,1,1)$, $(1,1,0,0)$, $(1,1,1,1)$. The final Hamiltonian corresponding
to this problem is composed of 2-local Hamiltonians. To see this we define the projector
\begin{equation}
P_z \ass \frac{1}{2}(I - Z)\enspace ,
\end{equation}
where $Z$ is the Pauli \textsc{Z} matrix. We see that:
\begin{multline}
P_z\bra x \ket = x \bra x \ket, \quad P_z \otimes P_z \bra xy \ket = (x \wedge y) \bra xy \ket,\\ \quad P_z \oplus P_z \bra xy \ket = (x \vee y) \bra xy \ket,
\quad (1 - P_z) \bra x \ket = (\neg x) \bra x \ket\enspace .
\end{multline}
An encoding of the above 2-\textsc{SAT} is then given by:
\begin{multline}
\label{eq:hprob}
H_1 = 1 - \frac{1}{2} \left[ P_z^2 + (1-P_z)^2 \right] \otimes \frac{1}{2} \left[ P_z^2 + (1-P_z)^2 \right] \\
= 1 - \frac{1}{4} \left[ P_z^4 + P_z^2 (1-P_z)^2 + (1-P_z)^2P_z^2 + (1-P_z)^4  \right]\enspace ,
\end{multline}
where we have used shorthands $P_z^2$ and $P_z(1-P_z)$ for $P_z \otimes P_z$ and for $P_z \otimes (1-P_z)$ respectively.

The problem Hamiltonian has a degenerate ground state whose basis vectors are satisfying assignments for our 2-\textsc{SAT}. Suppose there
is an oracle with a probability of $1/4$ of yielding any one of the four satisfying assignment. This oracle is represented by a black-box
Hamiltonian whose ground state is a superposition of the four assignments. Hence,
\begin{equation}
H_{\clubsuit} = 1 - \bra v \ket \kett v \brat, \quad \bra v \ket = \frac{1}{2}\left(\bra 0000 \ket + \bra 0011 \ket + \bra 1100 \ket + \bra 1111 \ket\right)\enspace .
\end{equation}
The initial Hamiltonian has a ground state $\bra 0000 \ket_x$, \textit{i.e.} a superposition of $2^4$ possible assignments. It is given by
\begin{equation}
H_0 = P_x^0 \oplus P_x^0 \oplus P_x^0 \oplus P_x^0\enspace .
\end{equation}

We describe a number of computations which solve the 2-\textsc{SAT} problem using the oracle's knowledge. These computations differ
both in the way in which the oracle's ``answers'' are processed and in their evaluation of the \textsc{SAT} clauses. The details are as follows.
In the first round of experiments we assume that $H_1$ is given to us. In this case we employ two different time-deformed tangle machines, drawn in Figure~\ref{fig:qudit}, which represent the following computations:
\begin{subequations}
\label{eq:defaqc}
\begin{equation}
O_1(t) = (H_0 \trr_1 H_{\clubsuit}) \trr_0 H_1\enspace \;
\end{equation}
\begin{equation}
O_1^\prime(t) = (H_0 \trr_1 H_1) \trr_0 (H_{\clubsuit} \trr_1 H_1)\enspace .
\end{equation}
\end{subequations}

\begin{figure}[htb]
\centering
\psfrag{0}[c]{\small $H_0$}
\psfrag{1}[c]{\small $H_{\clubsuit}$}
\psfrag{2}[c]{\small $H_1$}
\psfrag{a}[c]{\small $O_1$}
\psfrag{b}[c]{\small $O_2$}
\psfrag{c}[c]{\small $O_3$}
\psfrag{t}[c]{\small $t_1$}
\psfrag{s}[c]{\small $t_0$}
\psfrag{e}[c]{\small $O_1^\prime$}
\psfrag{f}[c]{\small $O_2^\prime$}
\psfrag{g}[c]{\small $O_3^\prime$}
\includegraphics[width=0.4\textwidth]{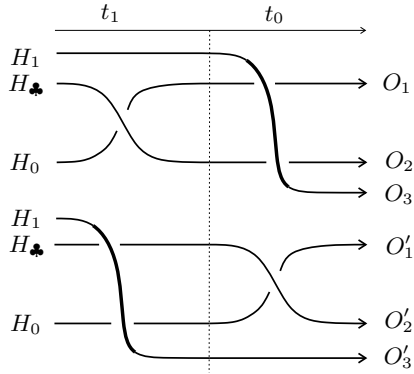}
\caption{\small Equivalent computations.}
\label{fig:qudit}
\end{figure}

In another round of experiments we embed $H_1$ into the tangle machine. Hence, instead of using the expression for $H_1$ in~\eqref{eq:hprob}, we consider
$H_1$ to be computed during the evolution of the system. In other words, we let:
\begin{equation}
G(t) = \left[(1-P_z^4) \trr_{\star} (1-P_z^2(1-P_z)^2)\right] \trr_{\star} \left[(1-(1-P_z)^2P_z^2) \trr_{\star} (1-(1-P_z)^4)\right]\enspace ,
\end{equation}
where $a \trr_{\star} b \ass (1-\frac{1}{2} t) a + \frac{1}{2}t b$. Note that indeed
\[
\lim_{t \to 1} G(t) = H_1\enspace .
\]

Substituting $H_1$ with its evolving version $G(t)$ in \eqref{eq:defaqc} yields
\begin{subequations}
\label{eq:defaqc1}
\begin{equation}
O_1(t) = (H_0 \trr_1 H_{\clubsuit}) \trr_0 G(t)\enspace \;
\end{equation}
\begin{equation}
O_1'(t) = (H_0 \trr_1 G(t)) \trr_0 (H_{\clubsuit} \trr_1 G(t))\enspace .
\end{equation}
\end{subequations}
The diagrams describing the two computations are shown in Figure~\ref{fig:qudit1}. The dashed region represents the logic underlying $G(t)$.
Note that the number of inputs/outputs of this \emph{submachine} corresponds to the number of conjunctive clauses in the original \textsc{DNF} \eqref{eq:dnf}.

\begin{figure}[htb]
\centering
\psfrag{0}[c]{\small $H_0$}
\psfrag{1}[c]{\small $H_{\clubsuit}$}
\psfrag{2}[r]{\small $_{1-P_z^4}$}
\psfrag{3}[r]{\small $_{1-P_z^2(1-P_z)^2}$}
\psfrag{4}[r]{\small $_{1-(1-P_z)^2P_z^2}$}
\psfrag{5}[r]{\small $_{1-(1-P_z)^4}$}
\psfrag{a}[c]{\small $O_1$}
\psfrag{b}[c]{\small $O_2$}
\psfrag{c}[c]{\small $O_3$}
\psfrag{t}[c]{\small $t_1$}
\psfrag{s}[c]{\small $t_0$}
\psfrag{e}[c]{\small $O_1^\prime$}
\psfrag{f}[c]{\small $O_2^\prime$}
\psfrag{h}[c]{\small $O_3^\prime$}
\psfrag{g}[c]{\small $G(t)$}
\includegraphics[width=0.95\textwidth]{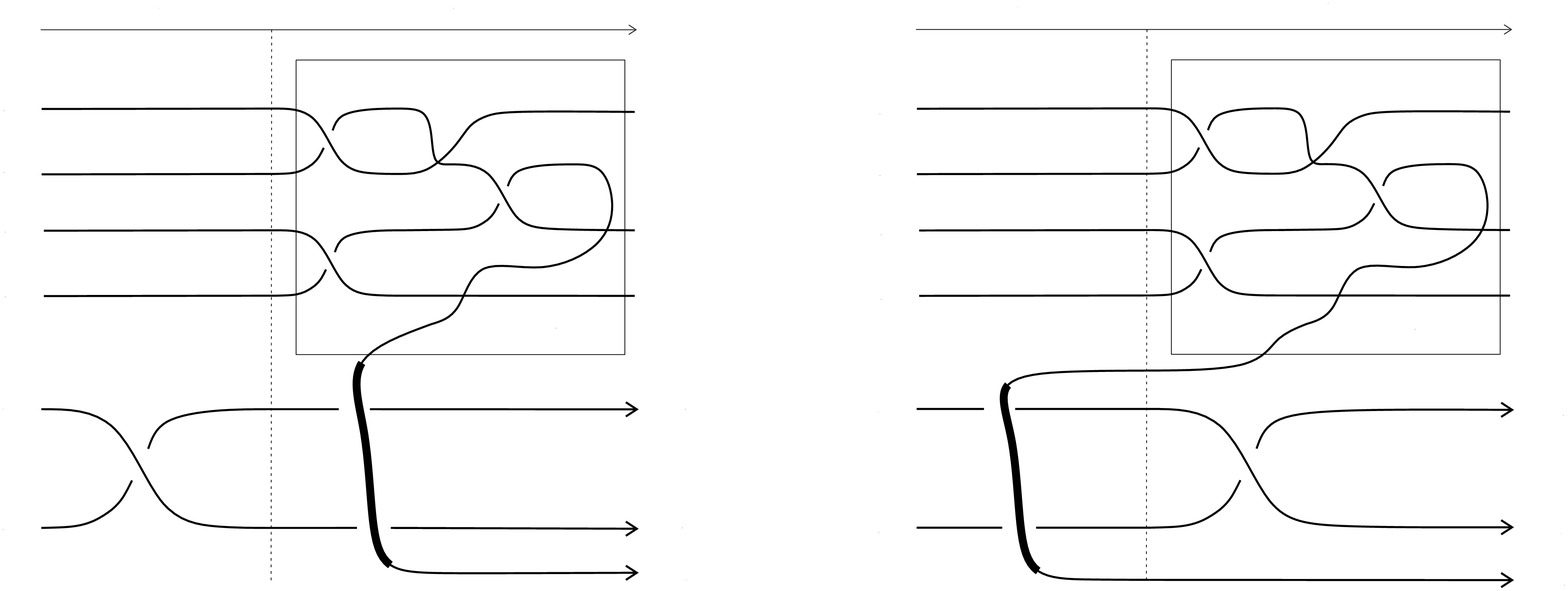}
\caption{\small Equivalent computations in which the problem Hamiltonian logic is encoded within the tangle (boxed region).}
\label{fig:qudit1}
\end{figure}

Figure~\ref{fig:qudit2} shows the energy gap over time for the various modes of computation in this example. Here we use $\alpha=0.5$.

\begin{figure}[htb]
\centering
\includegraphics[width=0.65\textwidth]{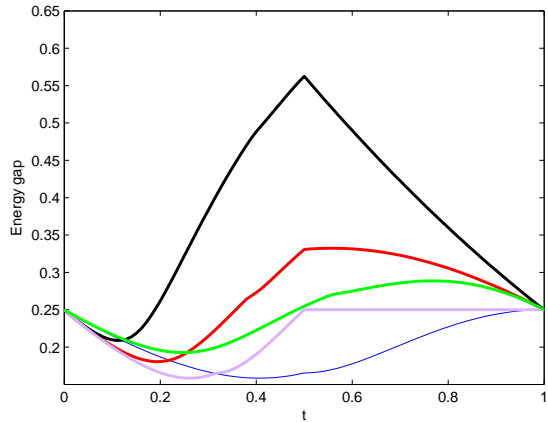}
\caption{\small Energy gaps in the qudit example. Standard \textsc{AQC} using $H_1$ (thin blue), \; standard \textsc{AQC} using $G(t)$ (green), \; time-deformed \textsc{AQC} using $H_1$ (purple), \; time-deformed \textsc{AQC} $O_1(t)$ using $G(t)$ (red), \; time-deformed \textsc{AQC} $O_1^\prime(t)$ using $G(t)$ (black).}
\label{fig:qudit2}
\end{figure}





\section{Utility of tangle machines}\label{S:Utility}

In this paper we have introduced tangle machines and they have told us tales of information flow in networks in both classical and quantum settings. In what way can this distributive structure be of use to information theorists and to computer scientists? We present two advantages that such a topological description provides following \cite{CarmiMoskovich:15c} and \cite{CarmiMoskovich:14}.

\subsection{Topological invariants}\label{SS:Invariants}

Knot diagrams are planar projections of knots in $\mathds{R}^3$ equipped with over/under information. Knots in $3$--space are ambient isotopic if and only if any two diagrams that represent them are related by a finite sequence of Reidemeister moves. Is there an analogous topological interpretation for tangle machines?

We have shown that tangle machines arise as planar projections of objects we call \emph{inca foams}~\cite{CarmiMoskovich:15c}, that are necklaces of spheres glued together along disjoint discs, all jointly embedded in $\mathds{R}^4$. Two inca foams are ambient isotopic if and only if any two tangle machines which represent them are related by a finite sequence of Reidemeister moves. This topological interpretation provides us tools to construct and to study \emph{topological invariants} of tangle machines, that are characteristic quantities which coincide when tangle machines are equivalent. These will ideally have clear and compelling interpretations in terms of information. We mention two invariants.

\begin{figure}
\centering
\includegraphics[width=0.8\textwidth]{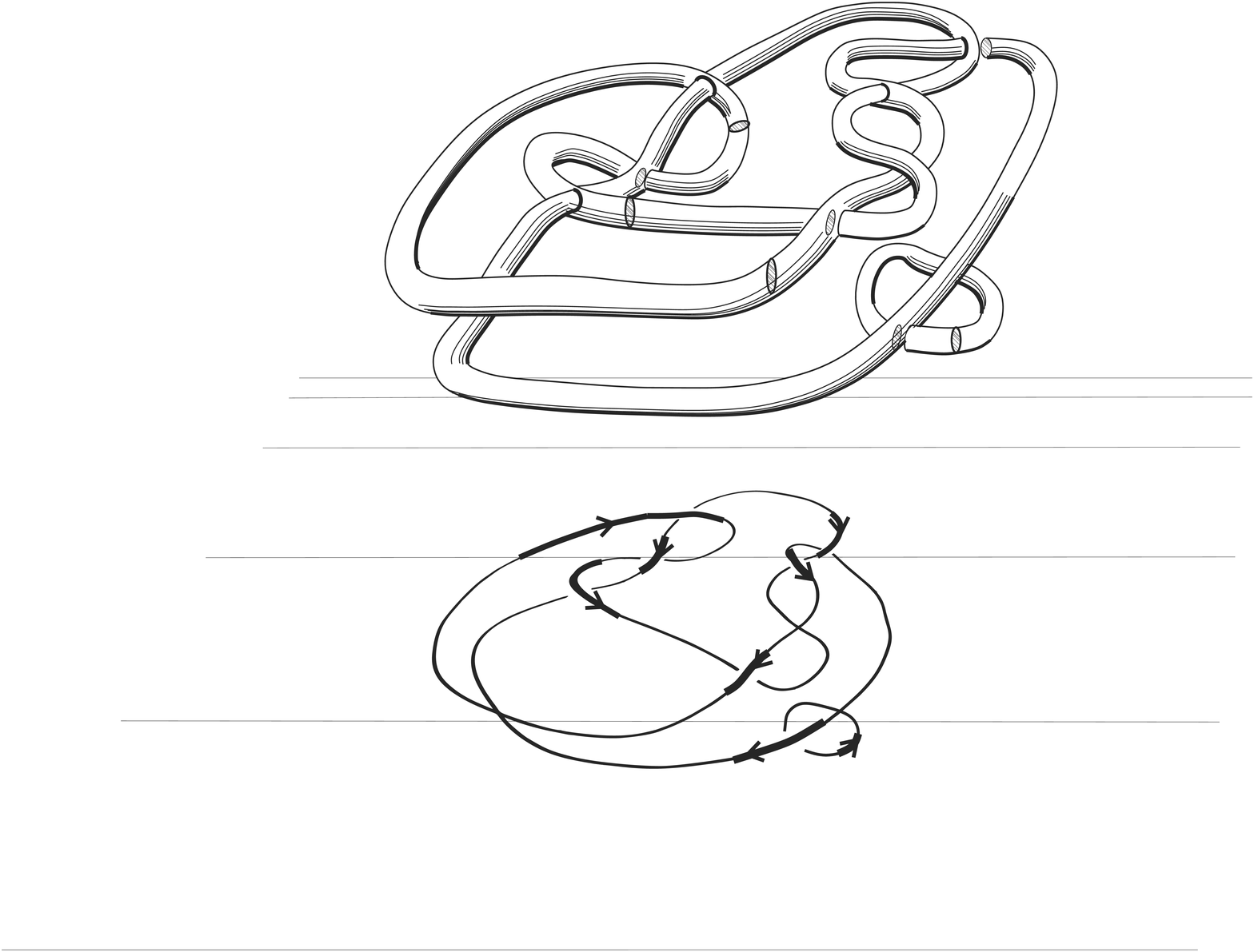}
\caption{\label{F:NecklaceExample} An Inca foam and a tangle machine representing it.}
\end{figure}

\begin{description}
\item[Shannon capacity] There is a well-defined notion of \emph{Shannon capacity} for a graph which measures the Shannon capacity of an communications channel defined by the graph (\textit{e.g.}~\cite{Erickson:14}). This notion extends to tangle machines~\cite{CarmiMoskovich:15c}.

Define an assignment of colours to arcs of a tangle machine $M$ to be \emph{confusable} is there is one of these colours that is uniquely determined by the others. Let $\textrm{Cap}_k(M)$ denote the number of distinct assigments of $k$ colours to arcs of $M$ which $M$ admits. Now define:

\begin{equation}
\mathrm{Cap}(M)\ass \sup_{k\in\mathds{N}}\sqrt[k]{\mathrm{Cap}_k(M)}
\end{equation}

\item[Complexity] Knots admit a binary operation called \emph{connect~sum} under which they form a commutative monoid:

\[
\includegraphics[height=45pt]{trefoilcol}\quad \raisebox{20pt}{\Hash} \quad \includegraphics[height=45pt]{trefoilcol}\quad \raisebox{20pt}{$=$} \quad \includegraphics[height=45pt]{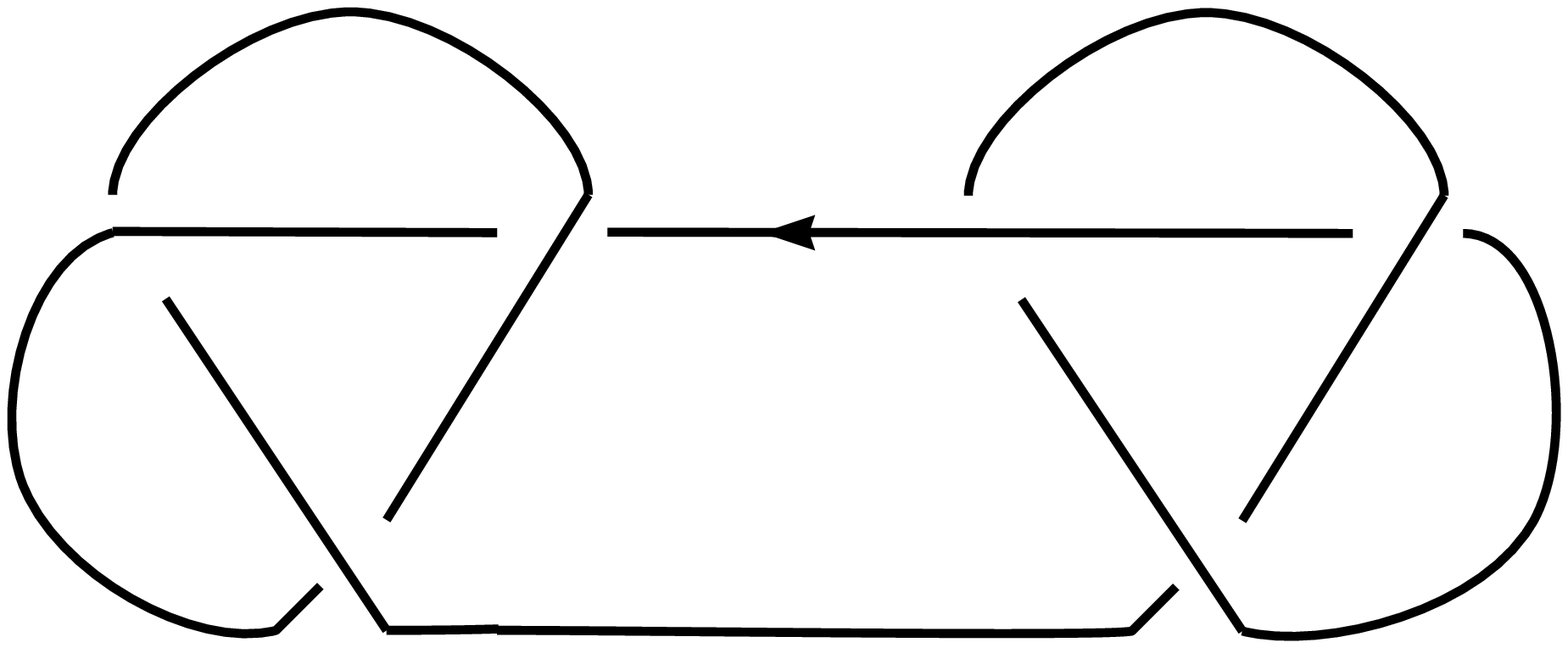}
\]

\begin{figure*}
\renewcommand{\thesubfigure}{\Alph{subfigure}}
\begin{subfigure}[t]{0.42\linewidth}
\psfrag{x}[c]{}
\psfrag{y}[c]{}
\psfrag{a}[c]{}
\psfrag{b}[c]{}
    \centering
    \vspace{10pt}
    \includegraphics[width=0.9\textwidth]{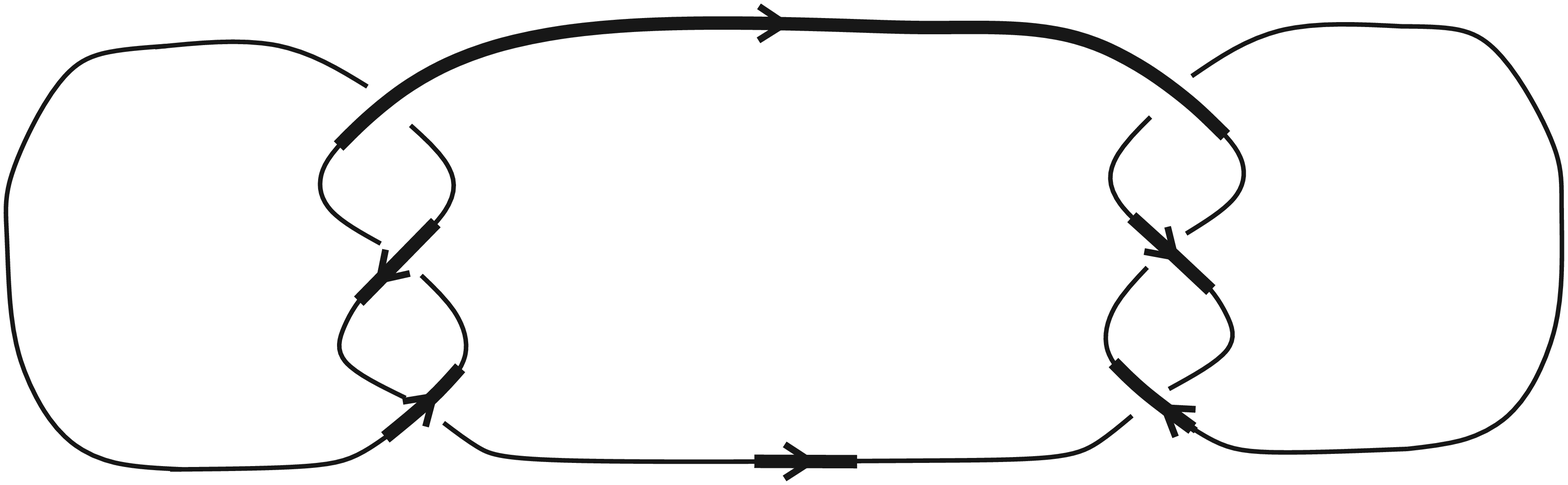}
\caption{\label{F:square1}}
\end{subfigure}
\qquad
\begin{subfigure}[t]{0.42\linewidth}
\psfrag{x}[c]{}
\psfrag{y}[c]{}
\psfrag{a}[c]{}
\psfrag{b}[c]{}
    \centering
    \vspace{10pt}
    \includegraphics[width=0.9\textwidth]{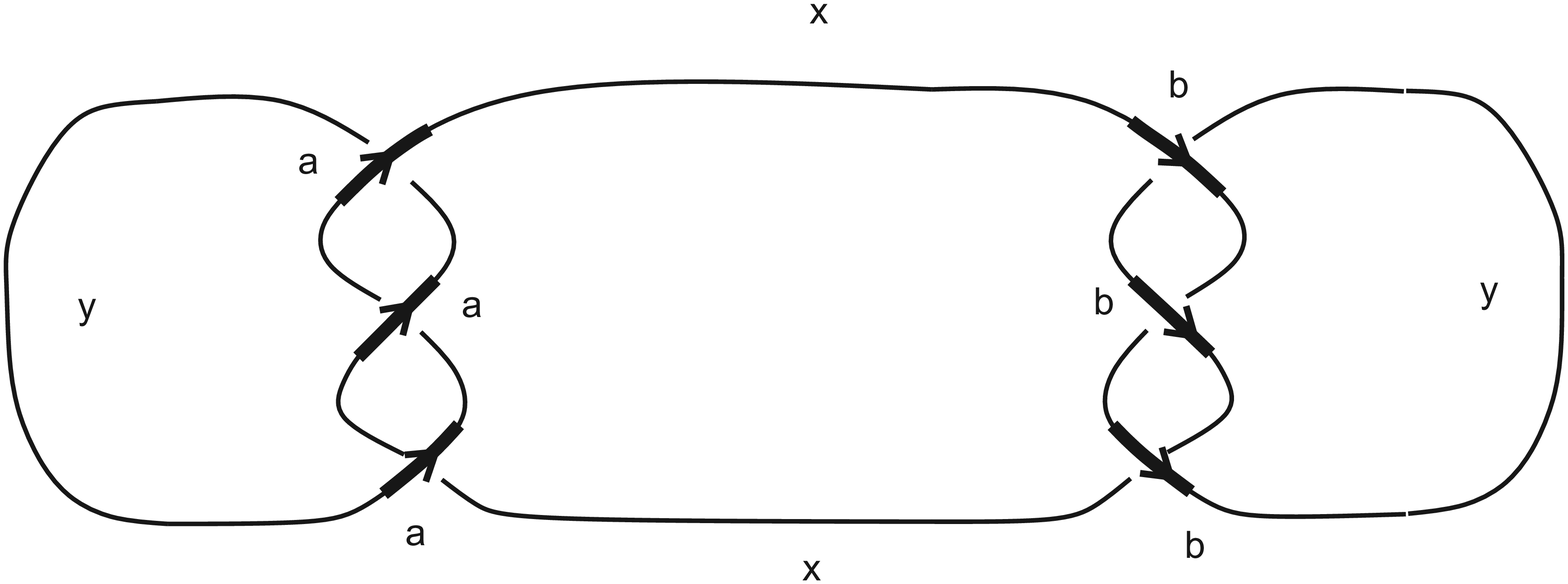}
\caption{\label{F:square2}}
\end{subfigure}%
\caption{\label{F:Complexity} \small Machine \ref{F:square1} has complexity $1$. Machine~\ref{F:square2} has complexity $2$ if the horizontal strands are the same colour, otherwise it too has complexity $1$.}
\end{figure*}

Coloured knots also form a commutative monoid under the connect-sum operation~\cite{Moskovich:06}. Although many related diagrammatic algebraic structures do not~\cite{Kim:00,KishinoSatoh:04}, tangle machines also form a commutative monoid under the connect-sum operation~\cite{CarmiMoskovich:15c}. More generally, we may define two submachines as being \emph{independent} if they are connected along a set of arcs all of which share the same colour~\cite{CarmiMoskovich:14b}. The maximal number of nontrivial connect-summands is a topological invariant of tangle machines, and is a (rough) measure of tangle machine complexity. See Figure~\ref{F:Complexity}.
\end{description}

\subsection{Flexibility of description}\label{SS:FaultTolerance}

A second advantage of the tangle machine formalism is that, unlike directed graphs, it is flexible. Reidemeister moves give different machines representing different information flows which globally achieve the same task.

In Section~\ref{S:Quantum} we will give examples in which the performance of one machine is significantly better than the performance of an equivalent machine  although both achieve the same task. Further examples are in~\cite{CarmiMoskovich:15b}. Here, following~\cite{CarmiMoskovich:14}, we shall present an idea for an application of the tangle machine description to fault tolerance of sensor networks which perform fusion using covariance intersection.

We may imagine a realization of a tangle machine using with modular robots called `cubelets' as shown in Figure~\ref{F:cubelets}.

\begin{figure}
\centering
\psfrag{a}[c]{$y$}
\psfrag{b}[c]{$x$}
\psfrag{c}[l]{$z$}
\includegraphics[width=0.9\textwidth]{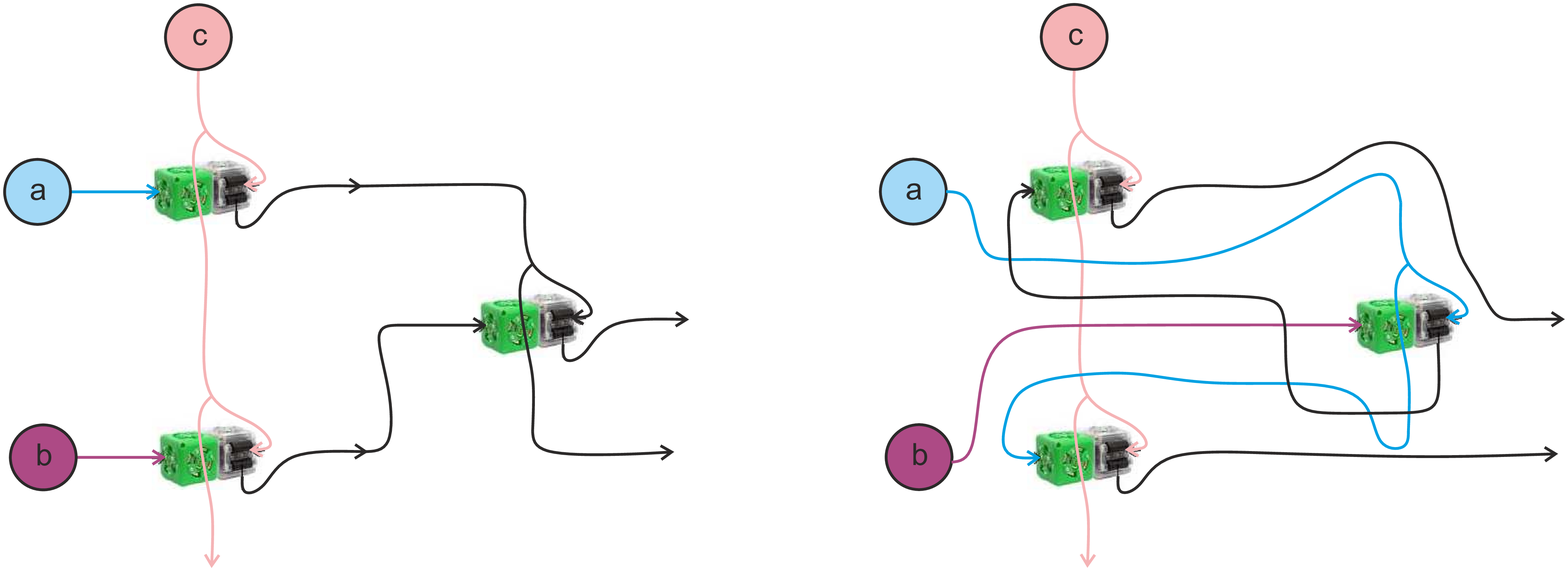}
\caption{\label{F:cubelets} \small The R3 move realized using cubelets~{\cite{Cubelets}}. Both these networks have the same outputs and thus represent different realizations of essentially the same modular robot. The cubelets themselves do not move, and what changes is only the configuration of which cubelet sends its output to which other cubelet.}
\end{figure}

 Three transmitting devices continuously stream data $x(T)$, $y(T)$, and $z(T)$. Our task is to combine these data streams, eliminating redundancy (\textit{e.g.} because of the Problem of Double~Counting~\cite{Jazwinski:70}). Two schemes to combine the data streams at a fixed time $T\geq 0$ are represented by the left and the right hand side of the R3 move in Figure~\ref{F:ReidemeisterMoves1}.

In the left machine, combine $x(T)$ and $y(T)$ with $z(T)$ with parameter $t\in(0,1)$ to obtain fused data streams $x(T)\trr_t z(T)$ and $y(T)\trr_t z(T)$. Assume that all estimators are Gaussian and that the fusion is performed using covariance intersection. We read the fusion from bottom to top because the overcrossing arc is directed from left to right. If it were oriented from right to left, we would read from top to bottom and we would be filtering out the data stream $z(T)$ from $x(T)\trr_t z(T)$ and from $y(T)\trr_t z(T)$. On the right, $x(T)\trr_t z(T)$ is combined with $y(T)\trr_t z(T)$ using a weight $s\in(0,1)$.

 The left and the right machine describe equivalent data stream fusion schemes, which is visually indicated by the fact that the diagrams are related by sliding one overcrossing arc over another. Indeed, the result of the data stream fusion in the right machine is $\left(x(T)\trr_s y(T)\right)\trr_t z(T)$, which is the same combined data stream as in the left machine because redundant double~appearance of $z(T)$ in $x(T)\trr_t z(T)$ and $y(T)\trr_t z(T)$ is eliminated by $\trr_s$ by assumption.

 However, there is an important difference between these two schemes on the left and right of the R3 move. Imagine at some time $T_1>0$ that stream $y$ becomes faulty. In this case the left machine is superior because it contains the intermediate data stream $x(T)\trr_t z(T)$ which might be useful even when $\left(x(T)\trr_t z(T)\right)\trr_s\left(y(T)\trr_t z(T)\right)$ is junk. Conversely, if $z$ becomes faulty at some time $T_2>0$ then the right machine would be preferred. The top overcrossing arc might slide back and forth at different times. Thus the machines involved in the R3 move might be describing the underlying logic of a simple fault-tolerant data stream fusion network.

A more general version of the same scheme is described in Figure~\ref{fig:topo}. Consider the network illustrated in the upper left corner of Figure~\ref{fig:topo}. This network has a set of outputs outside the bounding disk (not shown). Some set of intermediate edges lies inside the subnetwork designated by $N_L$ and represented as an empty circle. In the course of network operation, erroneous streams of data cause one or more of the edges $0$,$1$, and $2$, to carry faulty pieces of information, \textit{e.g.} biased, inconsistent and otherwise unreliable estimates. This is detected inside the network. To inhibit the influence of this contamination on edges within $N_L$, the network performs Reidemeister moves on itself, transforming itself into an equivalent network. This is achieved by `sliding' the faulty edges all the way over $N_L$, by repeated application of the second and third Reidemeister moves. In the resulting topology, the faulty edges have no effect on $N_L$, and so local costs are improved.

\begin{figure*}[t]
\centering
\psfrag{s}[c]{\small $N_L$}
\psfrag{a}[c]{\small \emph{fault} $000$}
\psfrag{b}[c]{\small \emph{fault} $00X$}
\psfrag{c}[c]{\small \emph{fault} $0X0$}
\psfrag{d}[c]{\small \emph{fault} $X00$}
\psfrag{e}[c]{\small \emph{fault} $X0X$}
\psfrag{f}[c]{\small \emph{fault} $0XX$}
\psfrag{0}[c]{\small $0$}
\psfrag{1}[c]{\small $1$}
\psfrag{2}[c]{\small $2$}
\includegraphics[width=0.65\textwidth]{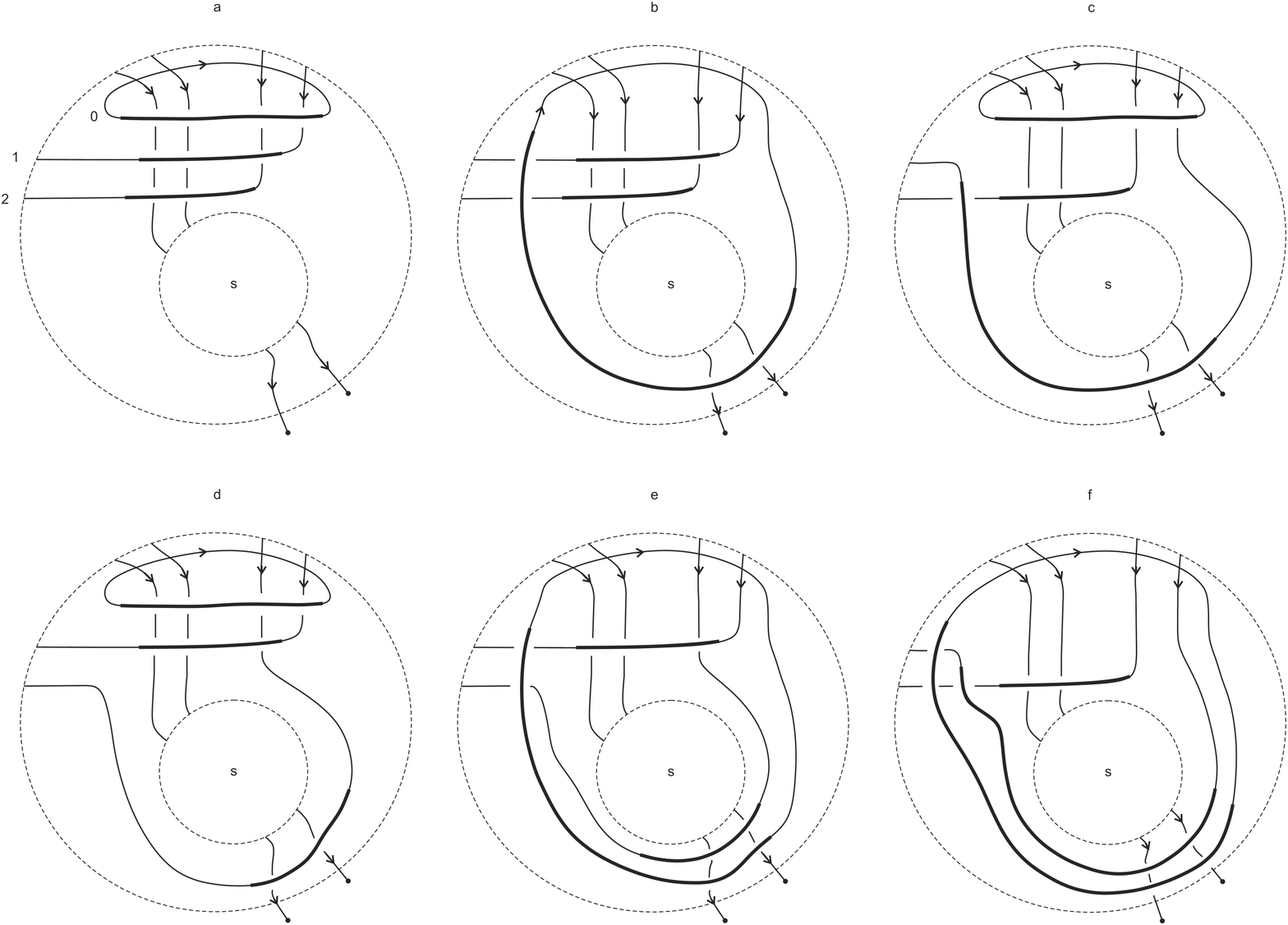}
\caption{\small Topological fault-tolerance. Switching between equivalent networks ameliorate the effects of faults. The fault code appearing above each network should be read as follows. The digits from left to right correspond to edges $2$,$1$ and $0$. An ``X'' in the place of the $i$th
digit signifies faulty content in its respective edge.}
\label{fig:topo}
\end{figure*}


In the above example, things were as easy as sliding faulty edges forward, but this is only because the networks we considered are topologically uninteresting. In the general case an optimal configuration may be difficult to identify.

\section{To learn a tangle machine}\label{S:Learn} 

Let's now try a different way to connect tangle machines to the real world. Given a data set we may ask whether there exists a tangle machine which adequately the flow of information such a set. Can a tangle machine represent a data-generating process for real world-data?

In this section we identify an interaction within real world data. It may be possible to expand our method and to adapt existing causality-detection algorithms to construct tangle machines from data.

\subsection{Bayesian networks and Granger causality}

One autoregressive system $X=\{X(t)\, \mid t\in \mathds{N}\}$ \emph{Granger causes} another autoregressive system $Y=\{Y(t)\, \mid t\in \mathds{N}\}$ if lagged values $\mathcal{L}^{T-p: T-1}(X)=(X(T-1),X(T-2),\ldots, X(T-p))$ contain information about $Y_T$ beyond the information contained in past values of $Y$ itself for some fixed $p\in \mathds{N}$ and for all $T>p$. Magnitude of Granger causality is measured a quantity called the \emph{transfer entropy} from $X$ to $Y$.

A Bayesian network is a directed acyclic graph whose nodes represent random variables and whose edges represent conditional dependencies. Generalizing, nodes of Bayesian networks may represent autoregressive systems and edges may represent transfer entropies from the tail to the head of each edge~\cite{Pearl:09}. Thus the tail of an edge Granger causes its head. There is a well-developed family of algorithms to learn transfer entropies for a given Bayesian network from time-series data.

We consider an interaction with a single input patient as representing a triangular Bayesian network. The agent corresponds to the source in the network, and the two patients correspond to the remaining two nodes. Thus, our idea is that the agent Granger causes both patients.  See Figure~\ref{fig:Bayesian}.

Our model is:
\begin{equation}
\begin{array}{l}
Z_{k+1} = s_{X \to Z} \mathcal{L}^{k-p:k-1}(X) + s_{Y \to Z} \mathcal{L}^{k-p:k-1}(Y)+\nu_Z \\
X_{k+1} = s_{Z \to X} \mathcal{L}^{k-p:k-1}(Z) + s_{Y \to X} \mathcal{L}^{k-p:k-1}(Y)+\nu_X \\
Y_{k+1} = s_{X \to Y} \mathcal{L}^{k-p:k-1}(X) + s_{Z \to Y} \mathcal{L}^{k-p:k-1}(Z)+\nu_Y\enspace ,
\end{array}
\end{equation}
\noindent where all variables are assumed to be normally distributed, $s$ terms are constant, and $\nu$ terms are noise.

Under the above assumptions, a short computation shows that the expression for the transfer entropies is maximized when the $s$ parameters are maximal \cite{Carmi:13}.

Thus to maximize the transfer entropy for a given data set, our goal becomes to solve the following least-square problem to estimate the $s$ parameters:

\begin{equation}
\begin{array}{l}
Z_{k+1} = s_{X \to Z} \mathcal{L}^{k-p:k-1}(X) + s_{Y \to Z} \mathcal{L}^{k-p:k-1}(Y) \\
X_{k+1} = s_{Z \to X} \mathcal{L}^{k-p:k-1}(Z) + s_{Y \to X} \mathcal{L}^{k-p:k-1}(Y) \\
Y_{k+1} = s_{X \to Y} \mathcal{L}^{k-p:k-1}(X) + s_{Z \to Y} \mathcal{L}^{k-p:k-1}(Z)\enspace .
\end{array}
\end{equation}
We then choose the configuration that corresponds to the greater value among each pair of parameters $\{s_{X\to Y}, s_{Y\to X}\}$, $\{s_{X\to Z}, s_{Z\to X}\}$, and $\{s_{Z\to Y}, s_{Y\to Z}\}$. This process fits a single input-patient interaction to a triangular Bayesian network.

By extension, an interaction with agent $Y$, input patients $X_1,X_2,\ldots, X_k$, and output patients $Z_1,Z_2,\ldots,Z_k$ with $Z_i= X_i\trr_\omega Y$  corresponds to a Bayesian network with nodes $Y,X_1,X_2,\ldots,X_k,Z_1,Z_2,\ldots Z_k$ in which the $X_i's$ arise as mixtures of autoregressive processes such that $Y$ only acts on one of these and is independent of the others. The extension of our method to these and to more complicated Bayesian networks will be considered in future work.

\begin{figure}
\begin{minipage}{1.1in}
\psfrag{a}[c]{$Y$}\psfrag{c}[c]{$X$}\psfrag{b}[c]{$Z$}
\scalebox{0.5}{\includegraphics[width=2.2in]{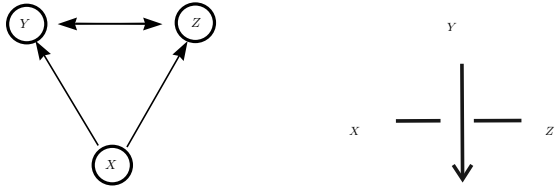}}
\end{minipage}
\qquad\qquad\quad
\begin{minipage}{1.1in}
\psfrag{u}[r]{\small $X$}
\psfrag{y}[c]{\small $Y$}
\psfrag{x}[l]{\small $Z$}
\scalebox{0.5}{\includegraphics[width=2.2in]{cross.eps}}
\end{minipage}
\caption{\label{fig:Bayesian} An interactions and its Bayesian network analogue. The arrow between $Y$ and $Z$ may point either way. An interaction-learning algorithm should learn the correct network configuration. Thus it should learn which the agent is, and the quandle operation.}
\end{figure}


We present an example in which we have learnt an interaction from real-life search data. Around 2006, the field of signal processing was revolutionized by the appearance of the concept of \emph{compressed sensing}. Compressed sensing recovers signals using far fewer observations than required by the Shannon--Nyquist sampling theorem. Several people were involved, one of whom was Emmanuel Candes. This story may be described by an interaction of the form: E.~Candes \emph{causes} a shift of state from \emph{Shannon--Nyquist sampling} to \emph{compressed sensing}.

We used \emph{Google Trends} to generate time-series data of hits per year from 2000 to 2010 for the three search terms: {\tt E.~Candes}, {\tt Shannon--Nyquist
sampling}, and {\tt compressed sensing}. We then used the filtering-based detection algorithm described in~\cite{Carmi:13} to recover the underlying Bayesian
network. As shown in Figure~\ref{fig:detection}, the actual relationship between the time-series could be recovered up to a time order of {\tt Shannon--Nyquist sampling} and {\tt compressed sensing}. There was an ambiguity between two possible interactions both of which identify {\tt E.~Candes} as a cause of the paradigm shift.

A refined and expanded version of this method will be needed to detect machines with more than one interaction.

\begin{figure}[htb]
\centering
\psfrag{t}[c]{\small \emph{time series data}}
\includegraphics[width=0.95\textwidth]{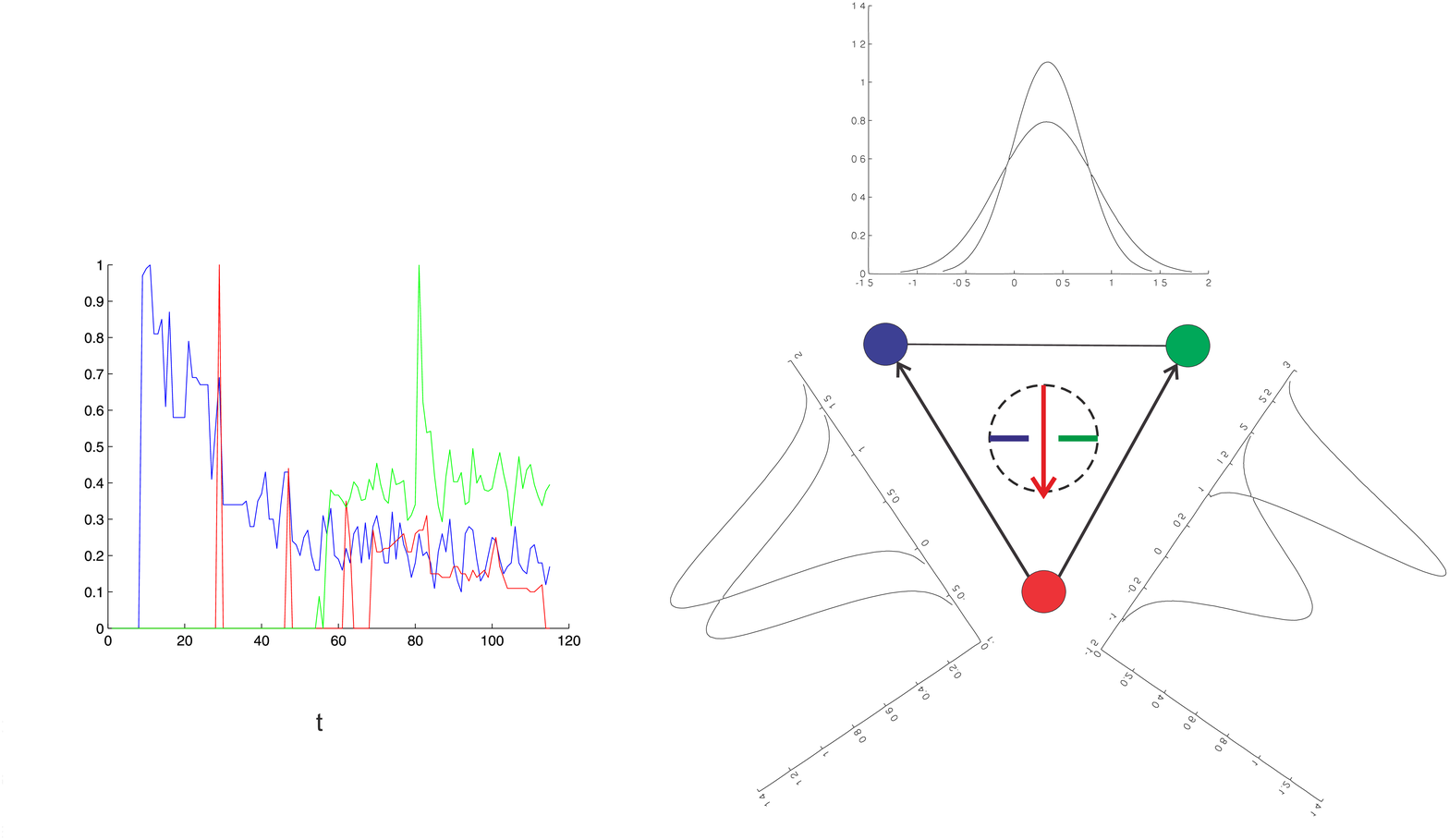}
\caption{\small Detecting an interaction from time-series data. Colour code: Red = {\tt E.~Candes}, \; Blue = {\tt Shannon--Nyquist sampling}, \; Green
  = {\tt compressed sensing}. The Bell-curves show the statistical significance of determining the corresponding arrow direction. Roughly, the further apart the curves are for each arrow the more significant is its (causal) direction.}
\label{fig:detection}
\end{figure}

\bibliographystyle{amsplain}

\newpage
\appendix

\section{Appendix}

The proof of Proposition~\ref{P:Geodesic} is the combination of the following two lemmas which are surely well-known.

\begin{lem}
The covariance intersection of weight $\omega\in [0,1]$ of two unnormalized Gaussian probability density functions $p(X)$ and $p(Y)$ is $p(X)^{1-\omega}p(Y)^\omega$.
\end{lem}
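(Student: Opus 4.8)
The plan is a direct computation by completing the square, exploiting the fact that the densities are \emph{unnormalized} so that multiplicative constants may be freely discarded. First I would write each factor explicitly in terms of a dummy state variable $v$: the density attached to the estimator $(\hat{X}_1,C_1)$ is $p(X)\propto \exp\!\left(-\tfrac12 (v-\hat{X}_1)^T C_1^{-1}(v-\hat{X}_1)\right)$ and that attached to $(\hat{X}_2,C_2)$ is $p(Y)\propto \exp\!\left(-\tfrac12 (v-\hat{X}_2)^T C_2^{-1}(v-\hat{X}_2)\right)$, matching the convention of Figure~\ref{fig:manifold}. Raising these to the powers $1-\omega$ and $\omega$ respectively and multiplying turns $p(X)^{1-\omega}p(Y)^\omega$ into a single exponential with exponent $-\tfrac12 Q(v)$, where
\[
Q(v) = (1-\omega)(v-\hat{X}_1)^T C_1^{-1}(v-\hat{X}_1) + \omega\,(v-\hat{X}_2)^T C_2^{-1}(v-\hat{X}_2)\enspace .
\]

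Next I would expand $Q(v)$ and collect it by degree in $v$. The quadratic term is $v^T\!\left[(1-\omega)C_1^{-1}+\omega C_2^{-1}\right]v$, so the inverse covariance of the product is read off immediately as $(1-\omega)C_1^{-1}+\omega C_2^{-1}$, which is exactly the covariance-intersection rule for $C_a^{-1}$ in~\eqref{eq:ci}. The linear term is $-2v^T\!\left[(1-\omega)C_1^{-1}\hat{X}_1 + \omega C_2^{-1}\hat{X}_2\right]$; completing the square against the quadratic form forces the mean of the product to be $\hat{X}_a = C_a\!\left[(1-\omega)C_1^{-1}\hat{X}_1 + \omega C_2^{-1}\hat{X}_2\right]$, which upon distributing $C_a$ is precisely the covariance-intersection rule for $\hat{X}_a$ in~\eqref{eq:ci}. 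Hence the product is an unnormalized Gaussian whose parameters coincide term-by-term with the covariance intersection $\left(\hat{X}_a,C_a\right)_\omega$, i.e.\ with $p(X\trr_\omega Y)$.

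Finally I would remark that completing the square also leaves a $v$-independent remainder (the leftover of the $\hat{X}_i^T C_i^{-1}\hat{X}_i$ terms after subtracting $\hat{X}_a^T C_a^{-1}\hat{X}_a$); since the statement concerns \emph{unnormalized} densities this scalar is irrelevant and is absorbed into the proportionality, so no normalization constant needs to be tracked. The main obstacle is purely bookkeeping rather than conceptual: there is no hard step, only the need to carry the multivariate matrix algebra of the completion of the square carefully and to verify that $C_a$, defined as the inverse of $C_a^{-1}=(1-\omega)C_1^{-1}+\omega C_2^{-1}$, makes the two written forms of $\hat{X}_a$ genuinely identical.
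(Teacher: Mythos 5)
Your proposal is correct and is essentially the paper's own argument: both write the exponent of $p(X)^{1-\omega}p(Y)^{\omega}$ as the weighted sum of the two quadratic forms and read off the precision $(1-\omega)C_1^{-1}+\omega C_2^{-1}$ and the mean $C_a\bigl[(1-\omega)C_1^{-1}\hat{X}_1+\omega C_2^{-1}\hat{X}_2\bigr]$, matching~\eqref{eq:ci}. The only cosmetic difference is that the paper extracts these by differentiating the log-density (precision as the negative Hessian, mean as the critical point), whereas you complete the square explicitly and discard the $v$-independent remainder; the two bookkeeping schemes are interchangeable here.
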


\begin{proof}

Let $p(X), p(Y)$ be unnormalized Gaussian densities parameterized by their mean and covariance, $(\hat{X}, C_X)$ and $(\hat{Y}, C_Y)$, respectively.
The pdf $p(X)^{1-\omega}p(Y)^\omega$ is given by the expression:
\begin{equation}
\label{eq:zz}
\exp\left(-\frac{1}{2} \left[ (1-\omega) (Z - \hat{X})^T C_X^{-1} (Z - \hat{X}) + \omega (Z - \hat{Y})^T C_Y^{-1} (Z - \hat{Y})\right]\right)\enspace .
\end{equation}

We verify that this expression equals:
\begin{equation}
\exp\left(-\frac{1}{2} (Z - \hat{Z})^T C_Z^{-1} (Z - \hat{Z}) \right) \enspace ,
\end{equation}
for:
\begin{subequations}
\begin{equation}\label{E:21}
\hat{Z} = (1-\omega) C_Z C_X^{-1} \hat{X} + \omega C_Z C_Y^{-1} \hat{Y}\enspace \;
\end{equation}
\begin{equation}\label{E:22}
C_Z^{-1} = (1-\omega) C_X^{-1} + \omega C_Y^{-1}\enspace .
\end{equation}
\end{subequations}

We compute that $C_X^{-1} = -\frac{\partial^2 \log p(x)}{\mathop{\partial x} \mathop{\partial x}^T}$ which is called the \emph{precision matrix} of $p(X)$. In particular,
\begin{equation}
C_Z^{-1} = - \frac{\partial^2 \log p(z)}{\mathop{\partial z} \mathop{\partial z}^T} = (1-\omega)\, C_x^{-1} + \omega\, C_Y^{-1} \enspace ,
\end{equation}
which follows from~\eqref{eq:zz}.

The mode $\hat{X}$ satisfies:
\begin{equation}
 \left.\frac{\partial \log p(x)}{\partial x} \right|_{x=\hat{X}} = 0\enspace .
\end{equation}
The mode $\hat{Z}$ of $p(Z)$ is thus obtained from~\eqref{eq:zz} by
\begin{equation}
2 \left.\frac{\partial \log p(z)}{\partial z} \right|_{z=\hat{Z}} = (1-\omega)\, C_x^{-1} (\hat{Z} - \hat{X}) + \omega\, C_Y^{-1} (\hat{Z} - \hat{Y}) = 0\enspace ,
\end{equation}
finally giving
\begin{equation}
C_z^{-1} \hat{Z} = (1-\omega) C_x^{-1} \hat{X} + \omega C_Y^{-1} \hat{Y} \enspace .
\end{equation}
\end{proof}

\begin{lem}
Let $P$ be a space of probability density functions (pdfs) over some domain. Consider the functional $J\colon\, P \to \mathds{R}$ defined by
\begin{equation}
\label{eq:slj}
J(g(X)) \ass (1-\omega) \mathrm{KL}\left( g(X) \| p(X) \right) + \omega \mathrm{KL}\left( g(X) \| q(X) \right)
\end{equation}
where $g(X)$, $p(X)$, and $q(X)$ are pdfs in $P$ and $\mathrm{KL}(\cdot \| \cdot)$ is the Kullback--Leibler divergence. Then the following holds for any $\omega \in [0,1]$ and $g(X) \in P$:
\[
J\left(\frac{p(x)^{1-\omega} q(x)^\omega}{\int_x p(x)^{1-\omega} q(x)^\omega \mathop{dx}}\right) \leq J(g(X))\enspace .
\]
\end{lem}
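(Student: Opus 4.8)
The plan is to reduce this variational statement to the nonnegativity of the Kullback--Leibler divergence (Gibbs' inequality), so that no genuine calculus of variations is needed. Write $r(x)\ass p(x)^{1-\omega}q(x)^\omega / Z$ for the normalized weighted product, with normalizer $Z\ass \int_x p(x)^{1-\omega}q(x)^\omega\,dx$. First I would expand the functional directly from its definition, collecting the two KL terms over a common integrand and using $(1-\omega)+\omega=1$:
\[
J(g(X)) = \int_x g(x)\Big[\log g(x) - (1-\omega)\log p(x) - \omega\log q(x)\Big]\,dx\enspace .
\]

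The key algebraic observation is that the log-linear combination assembles into the logarithm of the unnormalized geometric mean: $(1-\omega)\log p(x) + \omega\log q(x) = \log\big(p(x)^{1-\omega}q(x)^\omega\big) = \log Z + \log r(x)$. Substituting this, and using that $g$ is a probability density so that $\int_x g(x)\,dx = 1$, the functional collapses to a single divergence plus a constant:
\[
J(g(X)) = \int_x g(x)\log\frac{g(x)}{r(x)}\,dx - \log Z = \mathrm{KL}\big(g(X)\,\|\,r(X)\big) - \log Z\enspace .
\]

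Since $\log Z$ is independent of $g$, minimizing $J$ over $P$ is identical to minimizing $\mathrm{KL}(g\|r)$. The proof then concludes by invoking the standard fact that the Kullback--Leibler divergence is nonnegative and vanishes precisely when its arguments agree, which itself follows from Jensen's inequality applied to the convex function $t\mapsto -\log t$. Hence $J(g(X)) \geq -\log Z = J(r(X))$ for every $g\in P$, with equality exactly when $g = r$; taking $g$ to be the normalized weighted product on the left-hand side gives the claimed inequality.

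I do not expect a serious obstacle here: the reduction to a single KL term makes the global minimum immediate, with no need to compute a variational derivative or verify a second-order condition. The only point demanding a little care is the bookkeeping of the normalizing constant $Z$ and the verification that the $g$-dependence enters solely through the term $\mathrm{KL}(g\|r)$ — everything else being the absorbed additive constant $-\log Z$. A minor technical caveat worth flagging is integrability, i.e. that $r$ indeed lies in $P$ and that the integrals defining $J$ converge; for the Gaussian densities relevant to the preceding lemma this is automatic, but in the stated generality one would restrict to those $g$ for which $J$ is finite.
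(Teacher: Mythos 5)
Your proof is correct and follows essentially the same route as the paper's: both collapse the two Kullback--Leibler terms into a single divergence of $g$ against the normalized geometric mean $p(x)^{1-\omega}q(x)^\omega/Z$ plus a $g$-independent additive constant, and then conclude from the nonnegativity of the divergence. If anything your bookkeeping is the more careful of the two, since the paper records the absorbed constant as $-\int_x p(x)^{1-\omega}q(x)^\omega\mathop{dx}$ rather than $-\log\int_x p(x)^{1-\omega}q(x)^\omega\mathop{dx}$, a slip that is immaterial because the constant does not depend on $g$.
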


\begin{proof}
Writing down the Kullback--Leibler divergence explicitly, we have:
\begin{multline}
\label{eq:gg}
\min_{g(X)} J(g(X)) =
\min_{g(X)} \left[(1-\omega) \int_x g(x) \log \frac{g(x)}{p(x)} \mathop{dx} + \omega \int_x g(x) \log \frac{g(x)}{q(x)} \mathop{dx}\right] =\\
= \min_{g(X)} \int_x g(x) \log \frac{g(x)}{p(x)^{1-\omega} q(x)^\omega} \mathop{dx}= \\ =
\min_{g(X)} \left[\int_x g(x) \log \frac{g(x)}{p(x)^{1-\omega} q(x)^\omega} \mathop{dx} + \int_x p(x)^{1-\omega} q(x)^\omega \mathop{dx} - \int_x p(x)^{1-\omega} q(x)^\omega \mathop{dx} \right] = \\=
\min_{g(X)} \left[\mathrm{KL}\left( g(X) \bigg \| \frac{p(X)^{1-\omega} q(X)^s}{\int_x p(x)^{1-\omega} q(x)^s \mathop{dx}} \right) \right] - \int_x p(x)^{1-\omega} q(x)^\omega \mathop{dx} \enspace .
\end{multline}
But $\mathrm{KL}(\cdot \| \cdot) \geq 0$ vanishes if and only if its arguments coincide, and the result follows.
\end{proof}

\end{document}